\newcommand{\floor}[1]{\left\lfloor #1 \right\rfloor}
\newcommand{\ceil}[1]{\left\lceil #1 \right\rceil}
\newcommand{\eps}{\varepsilon}
\newcommand{\Oh}{\mathcal{O}}
\newcommand{\Ot}{\widetilde{\Oh}}
\newcommand{\Tht}{\widetilde{\Theta}}
\newcommand{\nat}{\mathbb{N}}
\newcommand{\Z}{\mathbb{Z}}
\newcommand{\Ss}{\mathcal{S}}
\newcommand{\Cc}{\mathcal{C}}
\newcommand{\Pp}{\mathcal{P}}
\newcommand{\poly}{\mathrm{poly}}
\newcommand{\norm}[1]{\lVert #1 \rVert}
\newcommand{\polylog}{\,\mathrm{polylog}}
\newcommand{\ol}{\overline}
\newcommand{\up}{\uparrow}
\newcommand{\dn}{\downarrow}
\renewcommand{\le}{\leqslant}
\renewcommand{\ge}{\geqslant}
\title{Knapsack and Subset Sum with Small Items}
\author{Adam Polak}{EPFL, Lausanne, Switzerland \and \url{https://adampolak.github.io/}}{adam.polak@epfl.ch}{https://orcid.org/0000-0003-4925-774X}{%
Supported by the Swiss National Science Foundation within the project \emph{Lattice Algorithms and Integer Programming} (185030).
Part of this work was done at Jagiellonian University, supported by Polish National Science Center grant 2017/27/N/ST6/01334}
\author{Lars Rohwedder}{EPFL, Lausanne, Switzerland \and \url{https://larsrohwedder.com/}}{lars.rohwedder@epfl.ch}{https://orcid.org/0000-0002-9434-4589}{Swiss National Science Foundation project 200021-184656}
\author{Karol W\k{e}grzycki}{Saarland University and Max Planck Institute for
Informatics}{wegrzycki@cs.uni-saarland.de}{https://orcid.org/0000-0001-9746-5733}{Project TIPEA that has
    received funding from the European Research Council (ERC) under the European Unions Horizon
2020 research and innovation programme (grant agreement No. 850979).}
\authorrunning{A.~Polak, L.~Rohwedder, and K.~W\k{e}grzycki}
\keywords{Knapsack, Subset Sum, Proximity, Additive Combinatorics, Multiset}
\begin{document}

\maketitle

\begin{picture}(0,0)
\put(340,-370)
{\hbox{\includegraphics[width=40px]{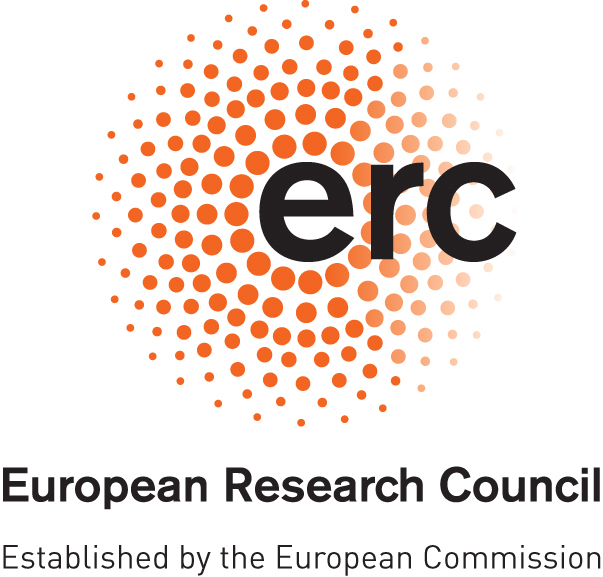}}}
\put(290,-390)
{\hbox{\includegraphics[width=60px]{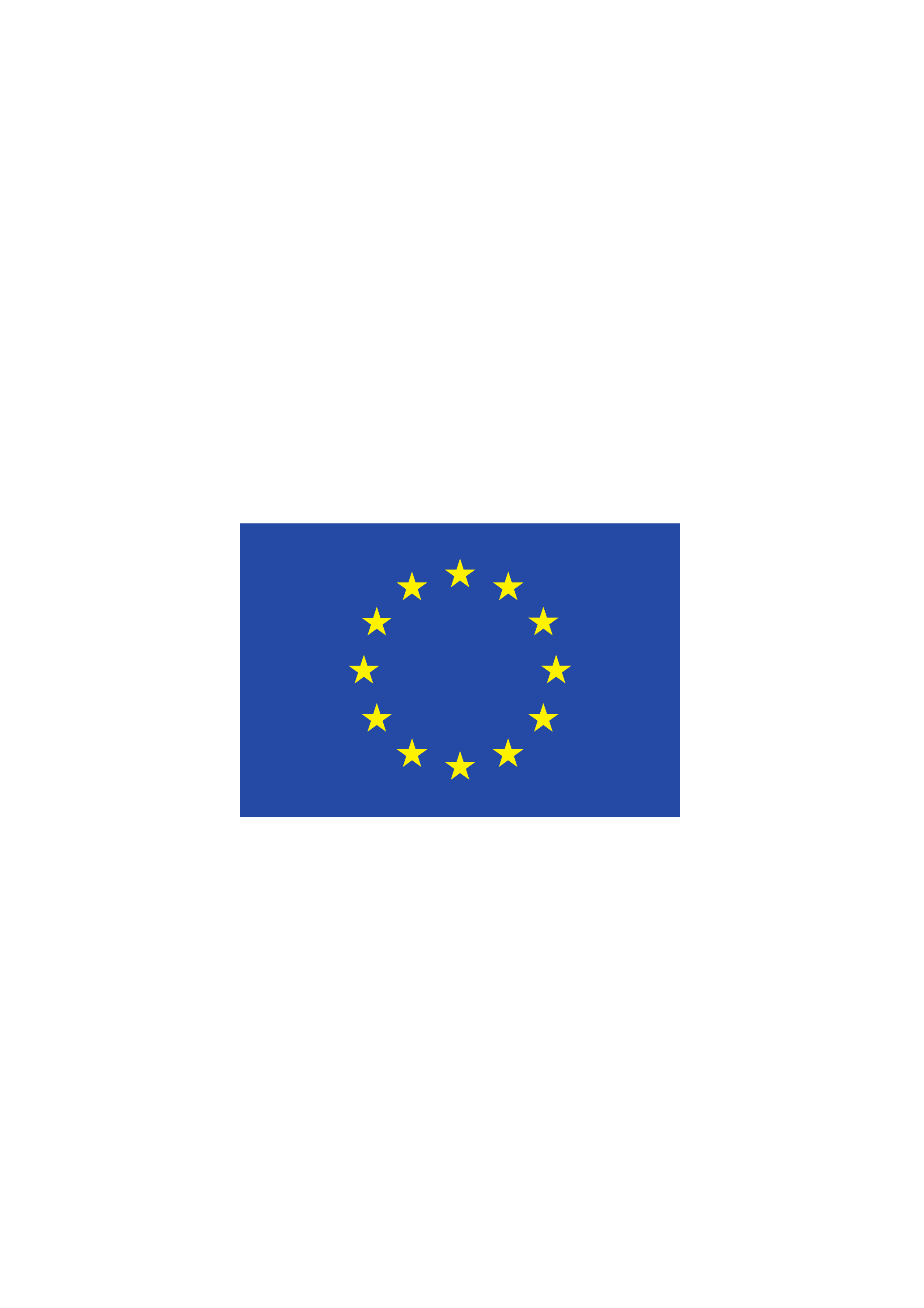}}}
\end{picture}

\begin{abstract}
Knapsack and Subset Sum are fundamental NP-hard problems in combinatorial optimization. Recently there has been a growing interest in understanding the best possible pseudopolynomial running times for these problems with respect to various parameters.

In this paper we focus on the maximum item size $s$ and the maximum item value $v$. 
We give algorithms that run in time $\Oh(n + s^3)$ and $\Oh(n + v^3)$ for the Knapsack problem, and in time $\Ot(n + s^{5/3})$ for the Subset Sum problem.

Our algorithms work for the more general problem variants with multiplicities, where each input item comes with a (binary encoded) multiplicity, which succinctly describes how many times the item appears in the instance. In these variants $n$ denotes the (possibly much smaller) number of \emph{distinct} items.

Our results follow from combining and optimizing several diverse lines of research, notably
proximity arguments for integer programming due to Eisenbrand and Weismantel (TALG 2019), 
fast structured $(\min,+)$-convolution by Kellerer and Pferschy (J.~Comb.~Optim.~2004),
and additive combinatorics methods originating from Galil and Margalit (SICOMP 1991).
\end{abstract}

\section{Introduction}

In the Knapsack problem we are given a (multi-)set consisting of $N$ items, where the $i$-th item has size $s_i$ and value $v_i$, and a knapsack capacity $t$. The task is to find a subset of items with the maximum total value such that its total size does not exceed the capacity $t$.
In the related Subset Sum problem we are given a (multi-)set of $N$ positive integers and a target value $t$, and the task is to find a subset of integers with the total sum exactly equal to $t$. The Subset Sum problem can thus be seen as a decision variant of the Knapsack problem with the additional restriction that $s_i = v_i$ for every item $i$.

Knapsack and Subset Sum are fundamental problems in computer science and
discrete optimization. They are studied extensively both from practical and
theoretical points of view (see, e.g.~\cite{kellerer-book} for a comprehensive
monograph). The two problems are (weakly) NP-hard,
and Bellman's seminal work on dynamic programming~\cite{bellman} gives
pseudopolynomial $\Oh(Nt)$ time algorithms for both of them. Recently there has been a growing interest in understanding the best possible pseudopolynomial running times for these problems with respect to various parameters, see, e.g., \cite{Bringmann17,bateni18,EisenbrandW18,AxiotisT19,dense-subset-sum}.

In this paper we consider binary multiplicity encoding of the Knapsack and Subset
Sum instances.  Each item $i$ given in the input has a (binary encoded) positive
integer multiplicity $u_i$, which denotes that up to $u_i$ copies of this item
can be used in a solution. In these variants $n$ denotes the (possibly much
smaller) number of distinct items and $N = \sum_{i \in [n]} u_i$.\footnote{We
use $[n]$ to denote $\{1,2,\ldots,n\}$.} 
Binary multiplicity encoding can be challenging because one requires the algorithm
to run in polynomial time in the input size, which can be exponentially smaller compared
to the naive encoding. A notable example of this setting is the breakthrough result of 
Goemans~and~Rothvo\ss~\cite{goemans-rothvoss14} showing that Bin Packing with few different item sizes
(and binary multiplicity encoding) can be solved in polynomial time.

Formally, Knapsack with multiplicities can be defined as an integer linear
program: maximize $\sum_{i \in [n]} v_ix_i$ subject to $0 \le x_i \le u_i$,
$x_i\in\Z$, and $\sum_{i\in [n]}s_ix_i \le t$. Similarly, Subset Sum with
multiplicities can be defined as a feasibility integer linear program with
constraints $0 \le x_i \le u_i$, $x_i\in\Z$, and $\sum_{i\in [n]}s_ix_i = t$.
Throughout the paper we use $u$ to denote the maximum item multiplicity
$\max_{i\in [n]} u_i$, and w.l.o.g.~we assume that $u \le t$.

\subsection{Our results}

We focus on pseudo-polynomial time algorithms with respect to the
maximum item size $s=\max_{i\in[n]}s_i$
(or maximum item value $v=\max_{i\in[n]}v_i$, which is essentially equivalent).
We note that $s$ is a stronger parameter compared to $t$ in the sense that $s$ can be much smaller than $t$,
but not vice versa. Yet, $s$ is less well understood than $t$.
In the regime where $n$ is large compared to $s$,
an $\Oh(n + \poly(s))$ time algorithm would be desirable.
We show that the Knapsack problem can indeed be solved
in such a time. Prior results (even for 0-1 Knapsack, that is, without multiplicities)
only came with the form $\Oh(\poly(n) \cdot \poly(s))$ or $\Oh(\poly(s) \cdot \poly(t))$.

This raises the natural question what the best exponent in the polynomial is.
In this paper we address the question from the upper bound side. We give algorithms for Knapsack running in $\Oh(n + s^3)$ and $\Oh(n + v^3)$ time, and for Subset Sum running in $\Ot(n + s^{5/3})$ time\footnote{Throughout the paper we use a $\Ot(\cdot)$ notation to hide
polylogarithmic factors.}.
Our algorithms are in the word RAM model, and we assume that each integer in the
input fits in one word. In particular, arithmetic operations on integers of size
polynomial in the sum of the input's integers require constant time.

Our first result is an algorithm for Knapsack.
We use proximity techniques due to Eisenbrand and Weismantel~\cite{EisenbrandW18} which
allow us to prove that there is an efficiently computable solution that differs only
very little from an optimal solution.
Then we apply a fast algorithm for structured $(\min,+)$-convolution~\cite{kellerer04} to search for this optimal solution
within the limited space. This results in a running time which is cubic in the maximum item size.
\begin{theorem}
\label{thm:knapsack}
    Knapsack (with multiplicities) can be solved in deterministic $\Oh(n+s^3)$ time.
\end{theorem}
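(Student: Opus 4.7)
The plan is to combine a proximity-based reduction with a structured $(\max,+)$-convolution. I would first extract a near-optimal integer solution from the LP relaxation, then search for the true optimum among solutions that differ very little from it, via a dynamic program with one variable per distinct size, whose updates are accelerated by the Kellerer--Pferschy concave-by-arbitrary $(\max,+)$-convolution.

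Solve the LP relaxation in $\Oh(n)$ time using a weighted median on the value-to-size densities; any basic LP optimum $x^*$ has at most one fractional coordinate, so let $\tilde x := \floor{x^*} \in \Z^n$. Put the knapsack ILP into equality form via one slack variable, giving a constraint matrix with one row and entries bounded by $s$. The Eisenbrand--Weismantel proximity theorem then yields an IP optimum $z^*$ with $\norm{x^* - z^*}_1 = \Oh(s)$, hence $\norm{y}_1 = \Oh(s)$ for $y := z^* - \tilde x$. Consequently $|\sum_i s_i y_i| \le \Oh(s^2)$, and for every size $s' \in [s]$ the net count change $\Delta k_{s'} := \sum_{i :\, s_i = s'} y_i$ lies in $[-\Oh(s), \Oh(s)]$.

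Since the LP greedy orders items by value within any fixed size class, both $\tilde x$ and (without loss of generality) $z^*$ pick the highest-value items of each size. So for each $s' \in [s]$ I construct a concave function $\hat f_{s'}$ on $[-\Oh(s), \Oh(s)]$, where $\hat f_{s'}(k)$ is the best value change obtainable by shifting the count of class-$s'$ items by $k$ (respecting multiplicities). Building these functions amounts to bucketing items by size and running linear-time selection within each bucket to isolate the $\Oh(s)$ relevant top or bottom values; total preprocessing: $\Oh(n + s^2)$.

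It remains to maximize $\sum_{s'} \hat f_{s'}(\Delta k_{s'})$ subject to $\sum_{s'} s' \Delta k_{s'}$ landing in a window of width $\Oh(s^2)$ dictated by the bound above and by $\sum_i s_i \tilde x_i \le t$. I would process $s' = 1,\ldots,s$ and maintain a table $D_{s'}(\Delta S)$ on the relevant $\Oh(s^2)$ values of $\Delta S$; the update is a $(\max,+)$-convolution of the arbitrary sequence $D_{s'-1}$ with $\hat f_{s'}$, which is concave and nonzero only at the $\Oh(s)$ multiples of $s'$ in its domain. The main obstacle is to perform each such update in $\Oh(s^2)$ rather than the naive $\Oh(s^3)$: this is exactly where the Kellerer--Pferschy linear-time concave-by-arbitrary convolution is invoked, applied separately to each of the $s'$ residue classes modulo $s'$ to handle the non-unit step. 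Summing $\Oh(s^2)$ over the $s$ sizes and returning the value of $\tilde x$ plus $\max_{\Delta S} D_s(\Delta S)$ yields the claimed $\Oh(n + s^3)$ running time.
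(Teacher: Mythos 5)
Your proposal is correct and follows essentially the same route as the paper: a proximity argument (Eisenbrand--Weismantel applied to the LP/greedy rounded solution) bounds how far the optimum can drift, and a size-by-size dynamic program accelerated by SMAWK-style concave $(\max,+)$-convolution per residue class gives $\Oh(s^2)$ per size. The only cosmetic difference is that you run a single DP over signed count changes $\Delta k_{s'}$, whereas the paper splits the difference into nonnegative "added" ($x^+$) and "removed" ($x^-$) parts and invokes its convolution lemma twice before recombining.
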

The definition of (the decision variant of) the Knapsack problem is symmetric with respect to sizes and values.
We give a simple transformation that allows us to apply our algorithm also to the case where
the maximal item value (and not the maximal item size) is small.

\begin{theorem}
    \label{thm:knapsack-small-values}
    Knapsack (with multiplicities) can be solved in deterministic $\Oh(n+v^3)$ time.
\end{theorem}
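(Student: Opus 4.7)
The plan is to reduce to Theorem~\ref{thm:knapsack} via a simple complementation that effectively swaps the roles of item sizes and values. I would first set $V_{\mathrm{tot}} := \sum_{i\in[n]} v_i u_i$ and $S_{\mathrm{tot}} := \sum_{i\in[n]} s_i u_i$, and use the bijection $y_i := u_i - x_i$ on the integer box $\{x\in\Z^n : 0 \le x_i \le u_i\}$ to rewrite
\[
V^* \;=\; \max\Bigl\{\sum_{i\in[n]} v_i x_i \,:\, \sum_{i\in[n]} s_i x_i \le t,\ 0\le x_i\le u_i,\ x_i\in\Z\Bigr\}
\]
as $V^* = V_{\mathrm{tot}} - C^*$, where
\[
C^* \;=\; \min\Bigl\{\sum_{i\in[n]} v_i y_i \,:\, \sum_{i\in[n]} s_i y_i \ge S_{\mathrm{tot}}-t,\ 0\le y_i\le u_i,\ y_i\in\Z\Bigr\}.
\]
The minimization is a covering variant of Knapsack whose cost coefficients are the $v_i$'s, all bounded by $v$.

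Next, I would apply Theorem~\ref{thm:knapsack} to the swapped Knapsack instance $\mathcal{I}'$ with items having sizes $v_i$, values $s_i$, and multiplicities $u_i$; its maximum item size is $v$. Letting $K^{sw}(C)$ denote the optimal value of $\mathcal{I}'$ at capacity $C$, the quantity $C^*$ is simply the smallest $C$ for which $K^{sw}(C) \ge S_{\mathrm{tot}} - t$, and by Theorem~\ref{thm:knapsack} the relevant portion of $K^{sw}$ can be produced in deterministic $\Oh(n + v^3)$ time, from which $C^*$ is read off in a constant number of passes.

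The key steps in order are: (i) compute $V_{\mathrm{tot}}$ and $S_{\mathrm{tot}}$ in $\Oh(n)$ time; (ii) build the swapped instance $\mathcal{I}'$; (iii) invoke Theorem~\ref{thm:knapsack} on $\mathcal{I}'$ and extract $C^*$; (iv) return $V^* = V_{\mathrm{tot}} - C^*$. The main obstacle is step~(iii): as stated, Theorem~\ref{thm:knapsack} returns the optimum at a single specified capacity, whereas we need the threshold $C^*$ at which $K^{sw}$ first reaches $S_{\mathrm{tot}} - t$. I expect this to follow directly from the structure of the proximity-based dynamic program underlying Theorem~\ref{thm:knapsack}: the algorithm computes the optimum over a small interval of capacities around an LP-based anchor, and a proximity argument applied to $\mathcal{I}'$ places $C^*$ inside this interval, allowing it to be extracted without a binary-search logarithmic overhead.
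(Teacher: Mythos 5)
Your approach is essentially the same as the paper's: complement to swap the roles of sizes and values, reduce to a Knapsack instance whose item sizes are the $v_i$'s, and avoid the binary-search log factor by computing optima for a window of $\Oh(v)$ consecutive capacities around a greedy anchor. The paper fills in the detail you flag in step~(iii) with a short Claim that the algorithm of Theorem~\ref{thm:knapsack} yields optimal solutions for all capacities $t' \in \{t-s,\dotsc,t\}$ in the same $\Oh(n+s^3)$ time, using the maximal prefix solution's value (within $v$ of the optimum) as the anchor.
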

Theorem~\ref{thm:knapsack} already implies that Subset Sum can also be solved in
$\Oh(n+s^3)$. Our algorithm uses as a subprocedure an $\Oh(n+st)$ time Knapsack algorithm (see Lemma~\ref{lem:maxplus}). If we simply replaced it with a $\Ot(N+t)$ time algorithm for Subset Sum~\cite{Bringmann17}, we would get a $\Ot(n+s^2)$ time algorithm for Subset Sum.
We improve on this by introducing a refined proximity argument that lets us further reduce an instance, where the maximum item multiplicity $u$ can be of the order of $s$, to two instances with $u \ll s$ each.
By combining this with additive combinatorics methods, originally developed by Galil and Margalit~\cite{galil-margalit} and recently generalized by Bringmann and Wellnitz~\cite{dense-subset-sum}, we then obtain a subquadratic algorithm.
\begin{theorem}
\label{thm:subsetsum}
    Subset Sum (with multiplicities) can be solved in randomized $\Ot(n+s^{5/3})$ time, with a one-sided error algorithm that returns a correct answer with high probability.
\end{theorem}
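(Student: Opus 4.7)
The plan is to improve the $\Ot(n+s^2)$ baseline (Theorem~\ref{thm:knapsack} with Bringmann's $\Ot(N+t)$ Subset Sum as the inner subroutine) by combining a refined proximity argument with an additive-combinatorial dense Subset Sum algorithm.

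\textbf{Step 1 (Normalization).} In $\Oh(n)$ time, merge items of identical size by summing their multiplicities. This yields an instance with at most $\min(n,s) \le s$ distinct items, so the additive $n$ in the claim is absorbed into preprocessing, and it suffices to solve instances with $n \le s$ in $\Ot(s^{5/3})$ time.

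\textbf{Step 2 (Standard proximity).} Using the Eisenbrand--Weismantel paradigm specialized to the one-dimensional constraint of Subset Sum, I would compute in $\Oh(s\log s)$ time coordinate intervals $[\ell_i, r_i] \subseteq [0, u_i]$ with $\sum_i (r_i - \ell_i) = \Oh(s)$ such that every feasible integer $x^*$ satisfies $x^*_i \in [\ell_i, r_i]$. The residual problem is then a Subset Sum instance whose variables range over $\prod_i[\ell_i,r_i]$ and whose total residual weight is $\Oh(s^2)$, with residual multiplicity $u := \max_i (r_i - \ell_i) \le s$.

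\textbf{Step 3 (Refined proximity).} The crux of the proof is a lemma showing that, whenever $u \gg s^{2/3}$, one can produce two shifted sub-boxes $B^{(1)}, B^{(2)} \subseteq \prod_i [\ell_i,r_i]$ each of residual multiplicity $\Oh(s^{2/3})$ such that every feasible $x^*$ lies in $B^{(1)} \cup B^{(2)}$. Thus the instance with $u=\Theta(s)$ is replaced by two sub-instances with $u \ll s$, at the cost of a factor of two in runtime.

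\textbf{Step 4 (Sparse regime).} In each sub-instance, expanding to single items gives $N \le n\cdot\Oh(s^{2/3}) = \Oh(s^{5/3})$, and the residual target lies in a window of length $\Oh(s\cdot s^{2/3}) = \Oh(s^{5/3})$. Running Bringmann's randomized $\Ot(N+t)$ Subset Sum algorithm on this window takes $\Ot(s^{5/3})$ time.

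\textbf{Step 5 (Dense regime via additive combinatorics).} When many coordinates actually saturate their residual width, the sub-instance becomes dense and a direct DP becomes the bottleneck. In this regime I would invoke the generalized Galil--Margalit / Bringmann--Wellnitz algorithm for dense Subset Sum, which solves such an instance in $\Ot(s^{5/3})$ time by exploiting that the set of attainable sums contains long arithmetic progressions, so the structured set of sums can be maintained implicitly.

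The hardest part is Step~3. The Eisenbrand--Weismantel bound $\sum_i(r_i-\ell_i) = \Oh(s)$ gives no control over individual coordinates, each of which can still be as large as $s$. Halving the worst-case width while preserving all feasible solutions requires an exchange argument that is specific to Subset Sum's one-dimensional constraint: given two feasible integer solutions, the difference vector sums (under weights $s_i$) to zero and can be decomposed into short sign-alternating exchanges, yielding a shifted "center" close to both. Once the halving is in hand, the exponent $5/3$ emerges from balancing the sparse DP cost $n\cdot u = s\cdot s^{2/3}$ against the dense additive-combinatorial cost for the complementary regime.
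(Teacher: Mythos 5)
There is a genuine gap, and it sits exactly where you placed your hope: Step~3. The proximity bound you invoke gives $\lVert x^* - p\rVert_1 \le \Oh(s)$, an $\ell_1$ ball. This set cannot be covered by a bounded number of sub-boxes of $\ell_\infty$-width $\Oh(s^{2/3})$ --- a single coordinate of $x^*$ can still deviate by up to $\Oh(s)$ from $p$, and different coordinates can be the one that deviates. So your claim that all feasible $x^*$ lie in $B^{(1)} \cup B^{(2)}$ with residual multiplicity $\Oh(s^{2/3})$ fails, and no exchange argument repairs it because the issue is not a missing lemma but a false geometric statement.

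The paper's decomposition is structurally different: it splits each item's multiplicity, not the solution space. With $k = \lfloor s^{1/3}\rfloor$, write each $u_i = k\,u^\up_i + u^\dn_i$ with $u^\up_i = \max\{0, \lfloor u_i/k\rfloor - 8\} = \Oh(s^{2/3})$ and $u^\dn_i = \Oh(s^{1/3})$; put $u^\up_i$ ``bundles of $k$ copies'' of $s_i$ into $I^\up$ and $u^\dn_i$ single copies into $I^\dn$. Then $\Ss(I) = \{k\,t^\up + t^\dn : t^\up \in \Ss(I^\up),\ t^\dn \in \Ss(I^\dn)\}$. The refined proximity lemma (using a carefully chosen \emph{robust split} of the prefix solution $p$ into $p^\up, p^\dn$) shows some optimal solution has $\lVert x^\up - p^\up \rVert_1 \le \Oh(s/k)$, so the candidate bundle-sums $t^\up$ lie in an interval of width $\Oh((s/k)\cdot s) = \Oh(s^{5/3})$ around $\sum_i p^\up_i s_i$; these are generated by running Bringmann's algorithm on the bounded-range ``added'' and ``removed'' bundle sets and combining with FFT. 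For $I^\dn$, which by construction has multiplicity $\Oh(s^{1/3})$, the Bringmann--Wellnitz oracle preprocesses in $\Ot(|I^\dn| + s^{3/2}(s^{1/3})^{1/2}) = \Ot(s^{5/3})$ and answers $t - k t' \in \Ss(I^\dn)$ in $\Oh(1)$ per candidate. Note this is not a sparse-vs-dense dichotomy as in your Steps~4--5: Bringmann's algorithm and the additive-combinatorics oracle are both used on the \emph{same} instance, on the bundle part and the remainder part respectively. Your balance point $s^{2/3}$ vs.\ $s^{1/3}$ is right, and the high-level toolkit (proximity + Bringmann + Bringmann--Wellnitz) is the correct one, but the bundling-at-the-item-level idea is the missing ingredient that makes the pieces fit together.
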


All our algorithms can also retrieve a solution, without increasing the
asymptotic running times. 
This is notable especially for our Subset Sum algorithm, in which we use as a
black box the Bringmann-Wellnitz algorithm that gives only yes/no answers. We can deal with
this presumable obstacle because we can afford to spend
more time on retrieving a solution than the Bringmann-Wellnitz algorithm could.

A limitation of our algorithms is that they can provide an answer only for a single target value $t$ at a time. Conversely, many (but not all) known Knapsack and Subset Sum algorithms can give answers for all target values between $0$ and $t$ at the same time. This limitation is however unavoidable: We aim at running times independent of the target value $t$, thus we cannot afford output size linear in $t$, because $t$ cannot be bounded in terms of $n$ and $s$ only.

In the next section we discuss how our results fit a broader landscape of existing Knapsack and Subset Sum algorithms.

\subsection{Related work}

\paragraph*{Pseudopolynomial time algorithms for Knapsack}

Bellman~\cite{bellman} was the first to show that the Knapsack problem admits a
pseudopolynomial time algorithm. He presented an $\Oh(Nt)$ time algorithm based on
dynamic programming.
Pisinger~\cite{pisinger91} gave an $\Oh(N s v)$ time algorithm, which is an improvement for
instances with both small sizes and small values.
He proved that only \emph{balanced feasible
solutions} to Knapsack need to be considered in order to find an optimal
solution. Then he used this observation to decrease the number of
states of the dynamic program. His arguments may be thought of as an early
example of proximity-based arguments.

Kellerer and Pferschy~\cite{kellerer04} studied approximation
algorithms for Knapsack. As a subroutine they developed a $\Ot(N+vp)$ time (exact) algorithm, where $p$ denotes the optimal total value. Their algorithm can be easily modified to work in $\Ot(N+st)$ time.
Their approach, based on fast $(\min,+)$-convolution for structured (convex)
instances was rediscovered and improved by Axiotis and Tzamos~\cite{AxiotisT19}.
Bateni et al.~\cite{bateni18} achieved the same $\Ot(N+st)$ running time with a different method, which can be seen as a far-reaching refinement of Pisinger's idea~\cite{pisinger91}.
They also developed the \emph{prediction} technique, which let them achieve $\Ot(N+vt)$ running time.

Eisenbrand and Weismantel~\cite{EisenbrandW18} studied more general integer linear programs, and presented a $\Ot(ns^2)$ time
algorithm for Knapsack with multiplicities, based on proximity-based arguments. To the best of our knowledge they
are the first to consider Knapsack with multiplicities.
Subsequently Axiotis and
Tzamos~\cite{AxiotisT19} improved logarithmic factors (in the non-multiplicity setting)
and gave an $\Oh(Ns^2)$ time algorithm, which they also generalized to $\Oh(Nv^2)$ time.
Bateni et al.~\cite{bateni18} also
explicitly consider the Knapsack problem with multiplicities and independently designed a $\Ot(ns^2 \min
\{n,s\})$ time algorithm. 

Axiotis and Tzamos suggested~\cite[Footnote~2]{AxiotisT19}
that the fast convex convolution can be combined
with proximity-based arguments of Eisenbrand and Weismantel~\cite{EisenbrandW18}
to obtain an algorithm for small items with running time independent of $t$.
However, a direct application of Eisenbrand and Weismantel~\cite{EisenbrandW18}
proximity argument (see \cite[Section 4.1]{EisenbrandW18}) reduces an instance to $t \le \Oh(ns^2)$, which, in
combination with $\Oh(N+st)$ algorithm, yields $\Oh(N+ns^3)$ runtime. Our
algorithm improves it to $\Oh(n+s^3)$ by a more careful proximity argument and
a convex convolution that explicitly handles negative items. Moreover, we
show how to extend this reasoning to the multiplicity setting.

\begin{remark}
    \label{rem:power}
    Lawler~\cite{lawler79} showed that the variant with multiplicities can be reduced to the
    0-1 variant. His reduction transforms a multiset composed of at most $u$
    copies of each of $n$ distinct numbers
    bounded by $s$ into an instance of $\Oh(n \log{u})$ numbers bounded by $\Oh(us)$. This
    easily enables us to adapt algorithms with no time dependence on $s$ (e.g.,
    the $\Oh(Nt)$ time algorithm of Bellman) into the setting with multiplicities (with
    logarithmic overhead).
\end{remark}

\begin{remark}
    \label{rem:harmon}
    There is also a folklore reduction that enables us to bound $N \le \Ot(t)$
    for the variant with multiplicities. For each $x \in [s]$ keep $\floor{t/x}$
    most profitable items of size $s_i = x$. This leaves us with at most $N \le
    \Oh(t \log(t))$ items and does not increase the item sizes.
\end{remark}

See Table~\ref{tab:history-knapsack} for a summary of the known results for Knapsack. 

\begin{table}
    \centering

    \caption{Pseudopolynomial time algorithms for \textbf{Knapsack}.
        $N$ is the total number of items,
        $n$ is the number of distinct items,
        $t$ is the knapsack capacity,
        $s$ is the maximum size and $v$ the maximum value of an item.
        Symbol (\---) means that no non-trivial optimization is given for the
        respective regime; running times can still be
        derived from the trivial inequalities $n \le N$ and $t \le Ns$. We use
        symbol $(\ddagger)$ when Remark~\ref{rem:power} applies and $(\dagger)$ when
        Remark~\ref{rem:harmon} applies.
        }
    \begin{tabular}{@{}lll@{}}

        \toprule
        \textbf{0-1 Knapsack} & \textbf{with multiplicities} & \textbf{Reference}   \\
        \midrule
         $\Oh(N t)$   &  $\Ot(n t)^\ddagger$ & Bellman~\cite{bellman} \\
         $\Oh(N s v)$ &  \--- & Pisinger~\cite{pisinger91} \\
         $\Ot(N + st)$ & $\Ot(n + st)^\dagger$  & Kellerer and Pferschy~\cite{kellerer04}, also~\cite{bateni18,AxiotisT19}\\
         $\Ot(N + vt)$ & $\Ot(n+ vt)^\dagger$  & Bateni et al.~\cite{bateni18}\\
         
          \---          & $\Ot(n s^2 \min \{ n,s\})$ & Bateni et al.~\cite{bateni18}\\
         $\Oh(N \min\{s^2,v^2\})$ & \--- & Axiotis and Tzamos~\cite{AxiotisT19}\\
          \--- & $\Ot(n s^2)$ & Eisenbrand and Weismantel~\cite{EisenbrandW18}\\
        \midrule
          \--- & $\Oh(n+\min\{s^3,v^3\})$ & \textbf{This paper}\\
        \bottomrule
    \end{tabular}
    \label{tab:history-knapsack}
\end{table}

\paragraph*{Pseudopolynomial time algorithms for Subset Sum}

Subset Sum is a special case of Knapsack and we expect
significantly faster algorithms for it. Pisinger's algorithm~\cite{pisinger91} runs in
$\Oh(Ns)$ time for Subset Sum. The first improvement in all parameter regimes
over the $\Oh(Nt)$ time algorithm of Bellman was given by Koiliaris and Xu~\cite{koiliaris-xu}. They
presented $\Ot(\sqrt{N}t + N)$, $\Ot(N + t^{5/4})$ and $\Ot(\Sigma)$ time
deterministic algorithms for Subset Sum, where $\Sigma$ is the total sum of
items. A by now standard method, used by all these algorithms, is to encode an instance of Subset Sum as a convolution problem that can be solved using Fast Fourier Transform.
Subsequently, Bringmann~\cite{Bringmann17} presented a $\Ot(N+t)$ randomized time algorithm for
Subset Sum based on the color-coding technique. Jin and
Wu~\cite{sosa-wu} later gave an alternative $\Ot(N+t)$ randomized time algorithm
based on Newton's iterative method. Their proof is notable for being very compact.

From a different perspective, Galil and Margalit~\cite{galil-margalit}
used additive combinatorics methods to prove that Subset Sum can be solved
in near linear time when $t \gg \Sigma s/N^2$ and all items are distinct.
Very recently~Bringmann and
Wellnitz~\cite{dense-subset-sum} generalized that result to multisets. Their algorithm
combined with the $\Ot(N+t)$ time algorithm~\cite{Bringmann17} yields a $\Ot(N+u^{1/2}s^{3/2})$ time algorithm for
Subset Sum with multiplicities (cf., Lemma~\ref{lem-addcomb}). For $u=1$ this gives 
the currently fastest $\Ot(N+s^{3/2})$ time algorithm (in terms of small $s$).
With our $\Ot(n+s^{5/3})$ time algorithm we improve upon their result for $u \gg s^{1/3}$.
We note that even with the naive (not binary) multiplicity encoding
our improvement is nontrivial, since the mentioned case with $u=1$ requires that each item has
a different size. For example, even an $\Oh(N + s^{1.99})$ time
algorithm does not follow immediately from~\cite{Bringmann17} when multiple items can have the same size.
We discuss their additive combinatorics methods in more detail in 
Section~\ref{sec:additive-comb-subset-sum-detail}.

See Table~\ref{tab:history-subsetsum} for a summary of the known results for Subset Sum.

\begin{table}
    \centering

    \caption{Pseudopolynomial time algorithms for \textbf{Subset Sum}.
        $N$ is the total number of items,
        $n$ is the number of distinct items,
        $t$ is the target value,
        $\Sigma$ is the sum of all items,
        $s$ is the maximum item, and $u$ is the maximum multiplicity of an item.
        Symbol (\---) means that no non-trivial optimization is given for the
        respective regime; running times can still be
        derived from the trivial inequalities $n \le N \le nu$ and $t \le Ns$.
        In $(\star)$ the instance cannot have two items with the same size,
        the algorithm works only for $u=1$.  We use
        symbol $(\ddagger)$ when Remark~\ref{rem:power} applies.
    }
        
    \begin{tabular}{@{}lll@{}}

        \toprule
        \textbf{0-1 Subset Sum} & \textbf{with Multiplicities} & \textbf{Reference} \\
        \midrule
        $\Oh(N t)$ & $\Ot(nt)^\ddagger$ & Bellman~\cite{bellman}   \\
        $\Oh(Ns)$ & \--- & Pisinger~\cite{pisinger91}\\
        $\Ot(N + \sqrt{N}t)$ & $\Ot(n + \sqrt{n}t)^\ddagger$ & Koiliaris and Xu~\cite{koiliaris-xu}\\
        $\Ot(N + t^{5/4})$ & $\Ot(n + t^{5/4})^\ddagger$ & Koiliaris and Xu~\cite{koiliaris-xu}\\
        $\Ot(\Sigma)$ & \--- & Koiliaris and Xu~\cite{koiliaris-xu}\\
        $\Ot(N + t)$ & $\Ot(n+t)^\ddagger$ & Bringmann~\cite{Bringmann17}\\
        $\Ot(N + s^{3/2})$ \hfill$\star$ & (not applicable) & Galil and Margalit~\cite{galil-margalit} \\
        \--- & $\Ot(N + u^{1/2}s^{3/2})$ & Bringmann and Wellnitz~\cite{dense-subset-sum} \\
        \midrule
        \--- & $\Ot(n + s^{5/3})$ & \textbf{This Paper}\\
        \bottomrule
    \end{tabular}
    \label{tab:history-subsetsum}
\end{table}

\paragraph*{Lower bounds}
Bringmann's Subset Sum algorithm~\cite{Bringmann17}, which runs in time $\Ot(N+t)$, was shown to be
near-optimal by using the modern toolset of fine-grained complexity. More
precisely, any $t^{1-\eps} 2^{o(N)}$ algorithm for Subset Sum, for any $\eps > 0$, would violate
both the Strong Exponential Time Hypothesis~\cite{subset-sum-lower2} and the Set Cover
Conjecture~\cite{subset-sum-lower}. This essentially settles the complexity of
the problem in the parameters $N$ and $t$.
These lower bounds use reductions that produce instances with $t=\Tht(s)$, and therefore they
do not exclude a possibility of
a $\Ot(N+s)$ time algorithm for Subset Sum. The question if such an algorithm
exists is still a major open problem~\cite{axiotis19}.

Bringmann and Wellnitz~\cite{dense-subset-sum}
excluded a possibility of a near-linear algorithm for Subset Sum in a \emph{dense} regime.
More precisely, they showed that, unless the Strong Exponential Time
Hypothesis and the Strong $k$-Sum Hypothesis both fail, Subset Sum requires $(s
\Sigma/(Nt))^{1-o(1)}$ time (where $\Sigma$ is the total sum of items).

For the Knapsack problem
Bellman's algorithm~\cite{bellman} remains optimal
for the most natural parametrization by $N$ and $t$.
This was explained by Cygan at el.~\cite{talg19} and Künnemann et al.~\cite{marvin17},
who proved an $(N+t)^{2-o(1)}$ lower bound assuming the $(\min,+)$-Convolution Conjecture.
Their hardness constructions create instances of
0-1 Knapsack where $s$ and $t$ are $\Tht(N)$. This is also the best lower bound
known for Knapsack with multiplicities. In particular, an $\Oh(N+s^{2-\eps})$ time algorithm
is unlikely, and our $\Oh(n+s^3)$ upper bound leaves the gap for the best exponent between $2$ and $3$.

\paragraph*{Other variants of Knapsack and Subset Sum}

We now briefly overview other variants of Knapsack and Subset Sum, which are not directly related to our results.
The Unbounded Knapsack problem is the special case with $u_i = \infty$, for all $i \in [n]$,
and one can assume w.l.o.g.~$N = n \le s$.
For that variant Tamir~\cite{tamir09} presented an $\Oh(n^2 s^2)$ time algorithm. 
Eisenbrand and Weismantel~\cite{EisenbrandW18}
improved this result and gave an $\Oh(ns^2)$ time algorithm using proximity arguments.
Bateni et al.~\cite{bateni18} presented a $\Ot(ns + s^2 \min\{n,s\})$ time algorithm.
Then, an $\Oh(n + \min\{s^2,v^2\})$ algorithm for Unbounded Knapsack was given
independently by Axiotis and Tzamos~\cite{AxiotisT19} and Jansen and Rohwedder~\cite{itcs19}.
Finally, Chan and He~\cite{DBLP:conf/esa/ChanH20} gave a $\Ot(ns)$ time algorithm.
Unbounded Knapsack seems to be an easier problem than 0-1 Knapsack because algorithms do not need to keep track of which items are already used in partial solutions. Most of the Unbounded Knapsack techniques do not apply to 0-1 Knapsack.

In the polynomial space setting, Lokshtanov and Nederlof~\cite{saving-space}
presented a $\Ot(N^4 s v)$ time algorithm for Knapsack
and a $\Ot(N^3 t)$ time algorithm for Subset Sum. The latter was subsequently
improved by Bringmann~\cite{Bringmann17}, who gave a $\Ot(Nt^{1+\eps)})$ time and
$\Ot(N\log{t})$ space algorithm. Recently, Jin, Vyas and
Williams~\cite{jin-soda21} presented a $\Ot(Nt)$ time and $\Ot(\log(Nt))$ space
algorithm (assuming a read-only access to $\Ot(\log{N}\log{\log{N}} +
\log{t})$ random bits).

In the Modular Subset Sum problem, all subset sums are
taken over a finite cyclic group $\mathbb{Z}_m$, for some given integer $m$.
Koiliaris~and~Xu~\cite{koiliaris-xu} gave a $\Ot(m^{5/4})$ time algorithm for
this problem, which was later improved by~\cite{axiotis19} to $\Oh(m
\log^7{m})$. Axiotis et al.~\cite{modular-subset-sum-21} independently
with Cardinal~and~Iacono~\cite{cardinal21} simplified
their algorithm and gave an $\Oh(m\log{m})$ time randomized and
$\Oh(m \polylog(m))$ deterministic time algorithms.
Recently, Pot\k{e}pa~\cite{potepa2020faster} gave the currently fastest $\Oh(m \log (m)
\alpha(m))$ deterministic algorithm for Modular Subset Sum (where $\alpha(m)$ is
the inverse Ackerman function).

Bringmann and Nakos~\cite{topkconv} designed a near-linear time algorithm for output
sensitive Subset Sum by using additive combinatorics methods. Finally, Jansen
and Rohwedder~\cite{itcs19} considered the Unbounded Subset Sum problem and
presented a $\Ot(s)$ time algorithm. 

\section{Techniques}

In this section we recall several known techniques from different fields,
which we later combine as black boxes in order to get efficient algorithms
in the setting with binary encoded multiplicities.
We do not expect the reader to be familiar with all of them,
and we include their brief descriptions for completeness.
Nevertheless, it should be possible to skip reading this section
and still get a high-level understanding of our results.

\subsection{Proximity arguments}
Now we introduce proximity arguments, which
will allow us to avoid a dependency on the multiplicities $u_i$ in the running time.
Very similar arguments were used by Eisenbrand and Weismantel~\cite{EisenbrandW18}
for more general integer linear programs. We reprove them for our simpler case to make the paper self-contained.

We will show that we can efficiently compute a solution to Knapsack which differs from an optimal solution only in a few items.
To this end, we define a \emph{\textbf{maximal prefix solution}} to be a solution obtained as follows.
We order the items by their efficiency, i.e.~by the ratios $v_i / s_i$, breaking ties arbitrarily.
Then, beginning with the most efficient item, we select items in the decreasing order of efficiency
until the point when adding the next item
would exceed the knapsack's capacity. At this point we stop and return the solution.

Note that a maximal prefix solution can be found in time $\Oh(n)$:
We first select the median of $v_i / s_i$ in time $\Oh(n)$~\cite{Blum73}.
Then, we check if the sum of all the items that are more efficient than the median exceeds $t$.
If so, we know that none of the other items are used in the maximal prefix
solution. Otherwise, all the more efficient items are used.
In both cases we can recurse on the remaining $n/2$ items. The running time is then of the
form of a geometric sequence that converges to $\Oh(n)$.

\begin{lemma}[cf., \cite{EisenbrandW18}]\label{lem:proximity}
  Let $p$ be a maximal prefix solution to Knapsack.
  There is an optimal solution $z$
  that satisfies $\lVert z - p \rVert_1 \le 2s$, where
  $\lVert z - p \rVert_1$ denotes $\sum_{i\in [n]} |z_i - p_i|$.
\end{lemma}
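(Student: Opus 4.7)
The plan is to pick $z$ to be an optimal solution minimizing $\norm{z-p}_1$ among all optima, and to derive a contradiction from the assumption $\norm{z-p}_1 > 2s$. The contradiction will come from an exchange argument: I will find a nonempty sub-multiset $A$ of the ``added'' items (coordinates with $d_i := z_i - p_i > 0$) and a nonempty sub-multiset $B$ of the ``removed'' items ($d_i < 0$) whose total sizes agree, after which $z' := z - A + B$ is another optimum with $\norm{z'-p}_1 < \norm{z-p}_1$, contradicting the choice of $z$.

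The exchange is enabled by two structural facts. Let $\eta$ denote the efficiency $v_{j^*}/s_{j^*}$ of the boundary item in the construction of $p$: every coordinate $i$ with $d_i > 0$ has $p_i < u_i$ and therefore comes from an item of efficiency $v_i/s_i \le \eta$; symmetrically, $d_i < 0$ forces $p_i > 0$ and efficiency $\ge \eta$. Set $T^{\pm} := \sum_i s_i d^{\pm}_i$. The construction of $p$ gives either $p = u$ (in which case $z = p$ is a valid choice and $\norm{z-p}_1 = 0$) or $\sum_i s_i p_i > t - s$, which together with $\sum_i s_i z_i \le t$ yields $T^+ - T^- = \sum_i s_i z_i - \sum_i s_i p_i < s$; the optimality $\sum_i v_i z_i \ge \sum_i v_i p_i$ combined with the efficiency bounds $\sum_i v_i d^+_i \le \eta T^+$ and $\sum_i v_i d^-_i \ge \eta T^-$ then forces $T^+ \ge T^-$. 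So $0 \le T^+ - T^- < s$.

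To locate the swap I would perform a Steinitz-type greedy reordering of the $M + K$ signed sizes (one $+s_i$ for each unit of $d^+$ and one $-s_i$ for each unit of $d^-$, with $M = \norm{d^+}_1$ and $K = \norm{d^-}_1$): at every step append a negative term if the running sum is $\ge 0$, a positive term if it is $< 0$, and whichever side is still non-empty otherwise. One verifies by induction that every running sum lies in $[-s,s]$: the bound is preserved in the balanced phase because each term has magnitude at most $s$, and once one side is exhausted the running sum is monotone and must end at $\sigma := T^+ - T^- \in [0,s)$, so it stays within $[-s,s]$. Since the $M+K+1 > 2s+1$ running sums take values in a set of size $2s+1$, pigeonhole produces two equal sums, and the corresponding contiguous block of terms sums to zero. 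Such a block cannot consist of terms of only one sign, so it yields the desired nonempty $A \le d^+$ and $B \le d^-$ with $\sum_i s_i A_i = \sum_i s_i B_i$.

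All that remains is to check $z' = z - A + B$: feasibility is automatic because $\sum_i s_i z'_i = \sum_i s_i z_i \le t$, and $A \le d^+$, $B \le d^-$ keep every coordinate in $[0, u_i]$; the efficiency bound gives $\sum_i v_i z'_i - \sum_i v_i z_i = \sum_i v_i B_i - \sum_i v_i A_i \ge \eta\bigl(\sum_i s_i B_i - \sum_i s_i A_i\bigr) = 0$, so $z'$ is optimal; and a coordinatewise inspection gives $\norm{z'-p}_1 = \norm{z-p}_1 - \norm{A}_1 - \norm{B}_1 < \norm{z-p}_1$. The delicate step will be the greedy reordering: one must rule out the possibility of getting stuck with a running sum outside $[-s,s]$, and this is exactly where both $T^+ \ge T^-$ and $T^+ - T^- < s$ from the second structural fact get used.
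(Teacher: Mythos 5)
Your proposal is correct and is essentially the same argument as the paper's. Both proofs pick an optimal $z$ minimizing $\lVert z-p\rVert_1$, bound the size of the difference $\sum_i s_i(z_i-p_i)$ within a window of width $\Oh(s)$, then run a greedy signed-sum walk whose partial sums stay in a range of $\Oh(s)$ integers, and apply pigeonhole to extract a size-balanced swap between the ``added'' and ``removed'' items; the swap preserves optimality via exactly the same efficiency comparison (every added item has efficiency $\le \eta$, every removed item has efficiency $\ge \eta$). The only presentational difference is the direction of the walk (you build up partial sums from $0$ toward $T^+-T^-$; the paper whittles $z-p$ down to $0$) and the fact that you derive the one-sided bound $T^+\ge T^-$ from optimality plus the efficiency inequalities, whereas the paper instead asserts that $z$, like $p$, is packed to within $s$ of the capacity $t$. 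These are cosmetic; the core pigeonhole-and-exchange idea is identical.
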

\begin{proof}
  Let $z$ be an optimal solution which minimizes $\lVert z - p \rVert_1$.
  If all the items fit into the knapsack, $p$ and $z$ must be equal.
  Otherwise, we can assume that the total sizes of both solutions,
  i.e.~$\sum_{i\in [n]} s_i z_i$ and $\sum_{i\in [n]} s_i p_i$,
  are both between $t - s + 1$ and $t$.
  In particular, we have that
  \begin{equation}\label{eq:proximity-diff}
      -s < \sum_{i\in [n]} s_i (p_i - z_i) < s.
  \end{equation}

  For the sake of the proof consider the following process.
  We start with the vector $z - p$, and we move its components towards zeros,
  carefully maintaining the bounds of~(\ref{eq:proximity-diff}).
  That is, in each step of the process, if the current sum of item sizes is positive, we reduce a positive component by $1$;
  if the sum is negative, we increase a negative component by $1$.
  The crucial idea is that during this process in no two steps we can have the same
  sum of item sizes. Otherwise, one could apply to $z$ the additions and removals performed
  between the two steps, and therefore obtain
  another solution that is closer to $p$ and is still optimal.
  Indeed, the optimiality follows from the fact that this operation does not increase the total size of the solution,
  and it also cannot decrease the value of the solution, because every item selected
  by $p$ but not by $z$ has efficiency no lower than every item selected by $z$ but not by $p$.
  Hence, the number of steps, i.e., $\lVert z - p \rVert_1$, is bounded by $2s$.
\end{proof}
This lemma can be used to avoid the dependency on multiplicities $u_1,u_2,\dotsc,u_n$ as follows.
We compute a maximal prefix solution $p$.
Then we know that there is an optimal solution $z$ with
\begin{equation*}
  z_i \in \{0, \dotsc, u_i\} \cap \{p_i - 2s, \dotsc, p_i + 2s\}
  \quad\forall i\in [n] .
\end{equation*}
Hence, we can obtain an equivalent instance by fixing the choice of some items.
Formally, we remove $\max\{0, p_i - 2s\}$ many copies of item $i$ and subtract
their total size from $t$.
If some item still has more than $4s$ copies, we can safely remove the excess.
This shows that one can reduce a general knapsack instance to an instance
with $u_i \le 4s$ for all $i\in [n]$.
In particular, a naive application of Bellman's algorithm would run in time
$\Oh(t\cdot\sum_{i=1}^n u_i) \le \Oh(n^2 s^3)$.
Later in this paper we will apply the same proximity statement in more involved arguments.

\subsection{Fast structured $(\min,+)$-convolution}

Another technique that we use in this paper is a fast algorithm for
structured instances of the $(\min,+)$-convolution problem. This technique was
already applied in several algorithms for
Knapsack~\cite{kellerer04,bateni18,AxiotisT19}.
We use it to find solutions for all knapsack capacities in $\{0,\dotsc,t\}$
in time $\Oh(n + st + t\log^2(t))$. We consider a slightly more general variant, where the values of items
may also be negative and the knapsack constraint has to be satisfied with equality.
To avoid confusion, we let $\bar v_i\in\mathbb Z$, $i\in [n]$, denote these
possibly negative values of items.
\begin{lemma}\label{lem:maxplus}
  Let $\bar v_1,\bar v_2,\dotsc,\bar v_n \in \mathbb Z$.
  In time $\Oh(n + st + t \log^2(t))$ one can solve
{\rm
\begin{align}
  &\max_{x\in\mathbb Z^n} \sum_{i \in [n]} \bar v_i x_i &&\text{ subject to }
  \sum_{i \in [n]} s_i x_i = t' &&\text{ and } \quad \forall_{i\in [n]}\ 0 \le x_i \le u_i,
\label{eq:eqknapsack}
\end{align}
}
for all $t' \in \{0,\dotsc,t\}$.
\end{lemma}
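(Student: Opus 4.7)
The plan is to group items by size and reduce the problem to a sequence of one-dimensional $(\max,+)$-convolutions against concave sequences, in the spirit of~\cite{kellerer04,bateni18,AxiotisT19}.

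First I would bucket items by size in $\Oh(n+s)$ time into lists $B_1,\ldots,B_s$ (items with $s_i>t$ can be discarded). For each nonempty bucket $B_d$, set $U_d := \min\bigl(\sum_{i \in B_d} u_i,\, \floor{t/d}\bigr)$ and let $f_d\colon\{0,\ldots,U_d\}\to\Z$ map $k$ to the maximum of $\sum_{i\in B_d}\bar v_i y_i$ subject to $\sum_i y_i = k$, $0 \le y_i \le u_i$. The greedy rule --- take copies in decreasing order of $\bar v_i$ --- solves this exactly, so the successive differences of $f_d$ are the sorted values (with multiplicity) and hence non-increasing; thus $f_d$ is concave, even when some $\bar v_i$ are negative. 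To compute all $f_d$ without a $\Theta(n\log n)$ overhead, inside each $B_d$ I would use weighted quickselect to locate the value at rank $U_d$ in $\Oh(|B_d|)$, sort only the strictly-above-threshold items (at most $U_d \le t/d$ of them), and output $f_d$ as a prefix sum. Over all $d$, the selection costs $\Oh(n)$, the sorting costs $\sum_d (t/d)\log(t/d) = \Oh(t\log^2 t)$, and the prefix sums cost $\Oh(t\log t)$.

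Next, initialize $R[0]=0$ and $R[j]=-\infty$ for $1\le j\le t$, and for each nonempty $B_d$ update $R[j] := \max_{0 \le k \le U_d,\ kd \le j} R[j-kd] + f_d(k)$. The key trick is to split $R$ into $d$ arithmetic subsequences modulo $d$: on each sub-array $R_r[\ell] := R[r+\ell d]$ (for $r=0,\ldots,d-1$) the update becomes the ordinary one-dimensional convolution $R_r'[\ell] = \max_{0 \le k \le \min(\ell,U_d)} R_r[\ell-k] + f_d(k)$ with the concave sequence $f_d$. The associated matrix $M[\ell,k] := R_r[k] + f_d(\ell-k)$ is Monge, so SMAWK computes all row maxima in $\Oh(|R_r|+U_d) = \Oh(t/d)$. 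Summing over the $d$ residues gives $\Oh(t)$ per size, and summing over the at most $\min(n,s)$ distinct sizes gives $\Oh(st)$.

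Correctness follows because any feasible point of~\eqref{eq:eqknapsack} is specified by a tuple of counts $(k_d)_d$ with $\sum_d d\cdot k_d = t'$ together with a choice of which copies of each size to use; the latter is solved optimally by $f_d(k_d)$, so after processing all buckets $R[t']$ equals $\max\{\sum_d f_d(k_d) : \sum_d d\cdot k_d = t'\}$. Adding the three contributions $\Oh(n+s)$, $\Oh(n+t\log^2 t)$ and $\Oh(st)$ yields the claimed $\Oh(n+st+t\log^2 t)$ bound. The single external ingredient invoked as a black box is linear-time $(\max,+)$-convolution of an arbitrary sequence with a concave one~\cite{kellerer04}; the only real subtlety in the plan itself is the quickselect-based preprocessing that prevents a global $n\log n$ blow-up when summing sorting costs across buckets.
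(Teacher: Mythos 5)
Your proposal is correct and follows essentially the same route as the paper: bucket items by size, compute the concave per-size optimal-value sequences via selection plus sorting of only the top $\Oh(t/d)$ items, and then iteratively fold each bucket into the DP table by splitting it into $d$ residue classes and running SMAWK on the resulting Monge-structured $(\max,+)$-convolution, for the identical $\Oh(n+s) + \Oh(n+t\log^2 t) + \Oh(st)$ accounting. The one small slip is calling the matrix $M[\ell,k]=R_r[k]+f_d(\ell-k)$ "Monge" when it is inverse-Monge (since $f_d$ is concave); SMAWK still applies for row maxima in that case, which is exactly what the paper verifies.
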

\begin{proof}
For $h\in [s+1]$ let $w^{(<h)} = \langle w^{(<h)}_0, w^{(<h)}_1,\dotsc,w^{(<h)}_t \rangle$
where $w^{(<h)}_{t'}$ denotes the value of an optimal solution to~\eqref{eq:eqknapsack} for the knapsack capacity $t'$
when restricting the instance to items $i$ with $s_i < h$.
If there is no solution satisfying the equality constraint with $t'$, we let $w^{(<h)}_{t'} = - \infty$.
Our goal is to compute the vectors $w^{(<1)}, w^{(<2)},\dotsc, w^{(<h+1)}$ iteratively.
We define the vector
$w^{(h)} = \langle w^{(h)}_0, w^{(h)}_1, \dotsc, w^{(h)}_t \rangle$
which describes the optimal solutions solely of items $j$ with $s_j = h$.
For each $i$, the component $w^{(h)}_{t'}$ for $t'=ih$ is equal
to the total value of the $i$ most valuable items of size $h$, or to $-\infty$ if there are less than $i$
items of size $h$. All components for indices not divisible by $h$ are $-\infty$.

Hence, to compute $w^{(h)}$ it suffices to find the $\ceil{t/h}$ most valuable items of size $h$
in the decreasing order of values.
In time $\Oh(n + s)$ we partition the items by their size $s_i$. Extracting
the $t / h$ most valuable items for all $h \in [s]$ requires in total a time of
$\Oh(n + \sum_{h\in [s]} t/h) \le \Oh(n + t \log(t))$.
Finally, sorting all sets takes in total
$\Oh(\sum_{h \in [s]} t/h \cdot \log(t/h)) \le \Oh(t \log^2(t))$ time.

Given $w^{(<h)}$ for some $h$ we want to compute the vector $w^{(<h+1)}$ in
time $\Oh(t)$. Then the lemma follows by iteratively applying this step. To this end
we notice that $w^{(<h+1)}$ is precisely the $(\max,+)$-convolution
of $w^{(h)}$ and $w^{(<h)}$, that is,
\begin{equation*}
  w^{(<h+1)}_i = \max\left\{ w^{(h)}_{j} + w^{(<h)}_{i - j}  \mid j\in\{0,\dotsc,i\} \right\} .
\end{equation*}
While in general computing a $(\max,+)$-convolution is conjectured to require
quadratic time~\cite{talg19,marvin17}, in this case it
can be done efficiently by exploiting the simple structure of $w^{(h)}$.
For each remainder $r \in \{0,\dotsc,h-1\}$ we separately compute the entries of indices
that are equal to~$r$~modulo~$h$.
We define the matrix $M\in \mathbb Z^{\lceil t/h \rceil \times \lceil t/h \rceil}$ with
\begin{equation*}
   M[i,j] = w^{(<h)}_{jh + r} + w^{(h)}_{(i - j) h},
\end{equation*}
where $w^{(h)}_{(i - j) h} = -\infty$ if $j > i$.
We do not explicitly construct the matrix, but we can compute any entry of $M$ in
the constant time.
To produce the vector $w^{(<h+1)}$ it suffices to find the maximum of each row of $M$.
This can be done efficiently, since $M$ is \emph{inverse-Monge}, that is,
\begin{multline*}
  M[i, j] + M[i + 1, j + 1]
  = w^{(<h)}_{jh + r} + w^{(h)}_{(i - j)h} + w^{(<h)}_{jh + h + r} + w^{(h)}_{(i - j)h} \\
  \ge w^{(<h)}_{jh + r} + w^{(h)}_{(i - j)h + h} + w^{(<h)}_{jh + h + r} + w^{(h)}_{(i - j)h - h}
  = M[i + 1, j] + M[i, j + 1] .
\end{multline*}

Therefore, we can compute the row maxima in time $\Oh(t/h)$ with SMAWK
algorithm~\cite{DBLP:journals/algorithmica/AggarwalKMSW87}.  This implies a
total running time of $\Oh(t)$ for all remainders $r$ and proves the lemma.
\end{proof}

\subsection{Additive combinatorics}
\label{sec:additive-comb-subset-sum}

In this section, we introduce a near-linear time algorithm
for dense instances of Subset Sum, more precisely, instances with $N^2 \gg u s$.

The techniques behind the algorithm were introduced by Galil~and~Margalit~\cite{galil-margalit}, and 
recently generalized to the multiset setting by Bringmann~and~Wellnitz~\cite{dense-subset-sum}.
We focus on the modern description of~\cite{dense-subset-sum}, and show
in Section~\ref{sec:recovering-results}
that in our application we can additionally report a solution.

\begin{theorem}[Bringmann and Wellnitz~\cite{dense-subset-sum}]
  \label{prop-addcomb}
  There exists $\lambda = \Tht(u s \Sigma / N^2)$ such that in time $\Ot(N)$ we can
  construct a data structure that for any $t$ satisfying $\lambda \le t \le
  \Sigma/2$ decides in time $\Oh(1)$ whether $t$ is a subset sum.
\end{theorem}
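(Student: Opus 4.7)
The plan is to follow the additive combinatorics strategy originating with Galil and Margalit and extended to the multiset setting by Bringmann and Wellnitz. The overall idea is that once the instance becomes sufficiently dense, in the sense that $N^2 \gg us$, the set of subset sums $\Ss(Z)$ of the multiset $Z$ is very rigidly structured inside a long middle interval: there it coincides with a union of arithmetic progressions sharing a small common difference $d$ that is canonically determined by $Z$.

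The core of the argument is the following structural claim. There exist an integer $d \ge 1$, bounded by a small function of the density parameter, and a ``reachable residue set'' $R \subseteq \{0,1,\dotsc,d-1\}$, such that for every $t$ in the window $[\lambda,\Sigma - \lambda]$ with $\lambda = \Tht(us\Sigma / N^2)$, the target $t$ is a subset sum if and only if $(t \bmod d) \in R$. The ``only if'' direction follows because $d$ has to divide the differences of items occurring with large multiplicity, so those residues are inherited by every subset sum. The ``if'' direction is the genuinely additive combinatorial statement: starting from a dense, distinct subcollection of items, one applies a Freiman--Plünnecke--Ruzsa-type sumset lemma to show that the iterated sumset already contains a long arithmetic progression of common difference $d$ and length $\Omega(N^2/u)$. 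This seed progression is then propagated through the whole window $[\lambda, \Sigma - \lambda]$ by successively absorbing the remaining items, each extension using that the current reachable set is an AP on a mod-$d$ residue class.

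With the structural claim at hand, the $\Ot(N)$ preprocessing is routine. We read off $\lambda$ from $u$, $s$, $\Sigma$ and $N$; we compute $d$ as a few gcd computations over the items with largest multiplicity (all items outside this bucket can only shift residues by a bounded amount, which folds into $R$); and we compile $R$ explicitly by a small brute-force enumeration. Each subsequent query reduces to a single modular test $(t \bmod d) \in R$, which runs in $\Oh(1)$.

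The main obstacle is the structural claim itself. Classical Sárközy-style arguments apply when all items are distinct and break down in the multiset regime, because repeated copies inflate $N$ without actually expanding the sumset. The technical heart of Bringmann--Wellnitz is a grouping argument that separates items of large and small multiplicity, reduces the large-multiplicity part to a distinct-items instance while paying a factor of $u$ in the density threshold, and then tracks quantitatively how $u$ feeds back into $\lambda$. This step is what ultimately pins down $\lambda = \Tht(us\Sigma/N^2)$ and makes the $\Oh(1)$-query data structure possible.
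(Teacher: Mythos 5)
The high-level shell of your sketch is on the right track: in the dense regime the set of subset sums inside a middle window is characterized by a modular condition with respect to a single integer $d$, and the data structure answers queries by a residue test. This matches the Bringmann--Wellnitz statement $t \in \Ss(X) \iff t \bmod d \in \Ss(X) \bmod d$ used by the paper. However, two of your concrete steps do not match how the result is actually obtained, and one of them is a genuine gap. First, $d$ is \emph{not} a gcd of the items of largest multiplicity. In Bringmann--Wellnitz the relevant notion is an \emph{almost divisor}: an integer $d>1$ such that the number of items \emph{not} divisible by $d$ is $O\bigl(u\Sigma/N^2\bigr)$. The reduction (the paper's Lemma~\ref{lem:reduce-divisor-free}) iteratively factors out such $d$ by computing prime factorizations of a sampled subset and testing, prime by prime, which primes fail to divide a constant fraction of a random subsample. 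A gcd over a ``bucket'' of items cannot replicate this, and your heuristic that items outside the bucket ``only shift residues by a bounded amount'' is false: a single low-multiplicity item $a_i$ shifts the residue by $a_i \bmod d$, which is unconstrained. Taking gcd over any fixed subcollection thus gives neither the right $d$ nor the right $\lambda$.

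Second, the structural argument is not carried by a Freiman--Pl\"unnecke--Ruzsa sumset inequality nor by a ``grouping by multiplicity.'' The decomposition at the heart of the proof (the paper's Lemma~\ref{lem:decomposition}) is $X = R \uplus A \uplus G$ where $R$ is chosen so that $\Ss(R)$ is $\kappa$-complete for a suitable modulus $\kappa$, $A$ consists of the smallest elements and its subset sums contain a long arithmetic progression of step $\kappa$ (this is where a S\'ark\"ozy-type density theorem enters, which you mention only in passing and conflate with Pl\"unnecke--Ruzsa), and $G$ carries at least half the mass. Targets $t$ in the window are then hit by greedily peeling $G$ down to a narrow interval, fixing the residue with a subset of $R$, and finishing with the progression in $\Ss(A)$; the parts are not split by multiplicity at all. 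Finally, ``compile $R$ explicitly by a small brute-force enumeration'' glosses over a real step: computing $\Ss(X) \bmod d$ is itself a modular subset sum computation, and the paper has to invoke a near-linear-time modular subset sum algorithm to make the $\Ot(N)$ preprocessing go through.
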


This is non-trivial when $\lambda \le \Sigma/2$, that is when $N^2 \gg us$.
We note that in the setting of Subset Sum with multiplicities
Theorem~\ref{prop-addcomb} gives an $\Ot(N + u^{1/2}s^{3/2})$ time algorithm.
We denote by $\Ss(I)$ all subset sums of a multiset $I$, and we write
$\Sigma(I) = \sum_{a\in I} a$.

\begin{lemma}
        
    \label{lem-addcomb}

    Given a multiset $I$ of size $N$, in $\Ot(N + s^{3/2} u^{1/2})$ time we can construct a data
    structure that, for any $t \in \nat$,
    \begin{enumerate}[(a)]
      \item \label{lem-addcomb-decision} determines whether $t \in \Ss(I)$ in time $\Oh(1)$, and
      \item if $t \in \Ss(I)$, it finds $X \subseteq I$ with $\Sigma(X) = t$ in time $\Ot(N + s^{3/2} u^{1/2})$.
    \end{enumerate}
\end{lemma}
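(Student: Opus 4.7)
The plan is to combine the decision-only data structure of Theorem~\ref{prop-addcomb} with Bringmann's $\Ot(N+t)$ Subset Sum algorithm, each applied in its favorable parameter regime, and to handle the upper half of targets by complementation $X \mapsto I \setminus X$.

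First I would bound the threshold $\lambda = \Tht(us\Sigma/N^2)$ from Theorem~\ref{prop-addcomb}. Since $\Sigma \le Ns$, we have $\lambda \le \Oh(us^2/N)$. If $N < \sqrt{us}$ then $\Sigma < s^{3/2}u^{1/2}$, so a single run of Bringmann's algorithm on $I$ up to target $\Sigma$ in $\Ot(N + s^{3/2}u^{1/2})$ time produces a bitmask of all subset sums together with reconstruction data, directly giving both (a) and (b). So from now on assume $N \ge \sqrt{us}$, whence $\lambda \le \Oh(s^{3/2}u^{1/2})$.

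For preprocessing I would (i) build Theorem~\ref{prop-addcomb}'s data structure on $I$ in $\Ot(N)$ time, answering decisions for $\lambda \le t \le \Sigma/2$; (ii) run Bringmann up to target $\lambda$ in $\Ot(N + \lambda) = \Ot(N + s^{3/2}u^{1/2})$ time, storing both the decision bitmask and the reconstruction information for targets $t \le \lambda$; and (iii) handle the remaining range $t > \Sigma/2$ by complementation, i.e. querying $\Sigma - t$ in the structures from (i)/(ii) and returning $I \setminus X$. Dispatching to the appropriate structure based on where $t$ lies gives part (a) in $\Oh(1)$ time.

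For part (b), the cases $t \le \lambda$ and $t \ge \Sigma - \lambda$ are immediately handled by the reconstruction in Bringmann's algorithm within $\Ot(N + \lambda)$ time. The main obstacle is the dense middle range $\lambda < t < \Sigma - \lambda$, where Theorem~\ref{prop-addcomb} only decides. My plan is to split $I$ into halves $I_1, I_2$ with $\Sigma(I_1) \approx \Sigma(I_2)$, rebuild the Bringmann--Wellnitz structures on $I_1$ and $I_2$ in $\Ot(N)$ total, locate a split $t = t_1 + t_2$ with $t_i \in \Ss(I_i)$, and recurse on each side; the recursion bottoms out once the subproblem target falls into the small range handled by the preprocessed Bringmann tables. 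The total work telescopes to $\Ot(N + s^{3/2}u^{1/2})$. The delicate step is finding a valid split $(t_1,t_2)$ in essentially $\polylog$ time per level, which I would justify via the structural fact from~\cite{dense-subset-sum} that the achievable subset sums on the dense range form an arithmetic progression, so $t_1$ can be chosen in a small window. Keeping the split balanced and controlling how the dense-range threshold $\lambda$ behaves as the instance shrinks down the recursion is the main technical difficulty I anticipate.
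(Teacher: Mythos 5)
Your preprocessing setup (Bringmann for the small-$\Sigma$ case, bounding $\lambda \le \Oh(s^{3/2}u^{1/2})$, Bringmann up to $\lambda$ for small targets, complementation for $t > \Sigma/2$) matches the paper exactly, and part (a) is fine. The gap is in part (b): your divide-and-conquer reconstruction over the dense middle range is a genuinely different idea from the paper's, but it is not worked out, and the difficulties you name at the end are not minor polish --- they are precisely where the argument is likely to break.

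Concretely, if you split $I$ into halves $I_1, I_2$ with $\Sigma(I_i) \approx \Sigma/2$ and $|I_i| \approx N/2$, the density $|I_i|^2/(us)$ drops by a factor of $4$ and the threshold $\lambda_{I_i} = \Tht(us\,\Sigma(I_i)/|I_i|^2)$ roughly \emph{doubles}. You need the recursion to bottom out (the subtarget falls below the subinstance's $\lambda$, or $\Sigma$ of the subinstance falls below $s^{3/2}u^{1/2}$) before the subinstance stops being $\Tht(1)$-dense; matching these two thresholds up to the right constants is exactly the claim you'd have to prove and you haven't. A second, independent obstruction is the almost-divisor issue: the dense ranges of $\Ss(I_1)$ and $\Ss(I_2)$ are not full intervals but are characterized by residues modulo divisors $d_1$ and $d_2$ that may differ between the halves, so ``$t_1$ can be chosen in a small window'' needs a CRT-style compatibility argument that is not sketched. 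Third, even granting a witness $X$ for $t\in\Ss(I)$, there is no guarantee $\Sigma(X\cap I_1)$ lands in the dense range of $I_1$; you would need to show a valid split always exists with both halves in their respective dense ranges (or handle the boundary cases with extra Bringmann calls), which is nontrivial. Finally, the claimed telescoping of work to $\Ot(N + s^{3/2}u^{1/2})$ is asserted, not shown; the leaf work in particular is delicate because the number of leaves grows while $\lambda$ grows as well.

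For comparison, the paper avoids recursion entirely. After reducing modulo a divisor $d$ via Lemma~\ref{lem:reduce-divisor-free}, the BW characterization (their Theorem 3.5) and a call to Axiotis et al.\ to handle the residue, it explicitly constructs the decomposition $I' = R \uplus A \uplus G$ (Claim~\ref{lem:decomposition-construct}). The key observation it then exploits from the structural proof of Theorem~\ref{thm:structural-part} is that the witness has the form $G' \cup T'$ where $G'$ is one of only $N$ greedy prefixes of $G$ and $\Sigma(T') = \Ot(\lambda_I)$. So it precomputes $T = \Ss(R\cup A)\cap[0,\Ot(\lambda_I)]$ with a single Bringmann call and just iterates over all $N$ greedy prefixes $G'$, looking up $t - \Sigma(G')$ in $T$. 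No knowledge of $a$ or $\kappa$ is needed, no subinstances, no density bookkeeping down a recursion. If you want to salvage your route you would need to solve the three issues above; the paper's single-level enumeration sidesteps all of them.
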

In order to prove that we can retrieve a solution with the given running time
we will need to get into technical details behind the proof of Theorem~\ref{prop-addcomb}.
We do it in Section~\ref{sec:recovering-results}.
Now, we sketch how to use Theorem~\ref{prop-addcomb}
as a blackbox to give an $\Ot(N + u^{1/2}s^{3/2})$ time algorithm that can
only detect if there is a solution.

\begin{proof}[Proof of Lemma~\ref{lem-addcomb}~(\ref{lem-addcomb-decision})]
    
    Let $\lambda$ be defined as in Theorem~\ref{prop-addcomb}.
    If the total sum of items $\Sigma$ is bounded by $\Ot(s^{3/2}u^{1/2})$,
    then we can use Bringmann's $\Ot(N+t)$ time Subset Sum algorithm~\cite{Bringmann17}
    to compute $\Ss(I)$. Therefore from now on we can assume that $s^{3/2}u^{1/2}
    \le \Ot(\Sigma) \le \Ot(Ns)$. In particular, this means that $\sqrt{us} \le \Ot(N)$. Hence,
    \begin{displaymath}
      \lambda \le \Ot\left(\frac{us \Sigma}{N^2}\right) \le
      \Ot\left(\frac{us^2}{N}\right) \le \Ot(u^{1/2}s^{3/2})
      .
    \end{displaymath}
    This means that we can afford $\Ot(\lambda)$ time. 
    In time $\Ot(N + \lambda)$ we find all subset sums in $\Ss(I) \cap [0,\lambda]$
    using Bringmann's algorithm~\cite{Bringmann17}.
    For $t \in [\lambda, \Sigma/2]$ we use Theorem~\ref{prop-addcomb} to decide
    in $\Ot(N)$ time if $t \in \Ss(I)$. For $t > \Sigma / 2$ we ask about $\Sigma - t$ instead.
\end{proof}

\section{Knapsack with small item sizes}
\label{sec:knapsack}

In this section we obtain an $\Oh(n + s^3)$ time algorithm for Knapsack
by combining the proximity and convolution techniques.

\begin{proof}[Proof of Theorem~\ref{thm:knapsack}]
Let $p$ be a maximal prefix solution. By Lemma~\ref{lem:proximity}
there is an optimal solution $z$ with $\lVert z - p \rVert_1 \le 2s$.
We will construct an optimal solution $x$ that is composed of three parts, that is,
$x = p - x^- + x^+$.
Our intuition is that $x^+$ is supposed to mimic the items that are included in $z$
but not in $p$. We denote these items by
$(z - p)_+$, where $(\cdot)_+$ takes for each component the maximum of it and $0$.
Likewise, $x^-$ intuitively stands for $(p - z)_+$, the items in $p$ but not in $z$.

To find $x^+$ and $x^-$ we will invoke twice the $\Oh(n + st + t \log^2(t))$ time algorithm of Lemma~\ref{lem:maxplus}.
Let $\Delta = t - \sum_{i \in [n]} s_i p_i$, that is, the remaining knapsack capacity in the prefix solution. We can assume w.l.o.g.\ that $\Delta < s$, since otherwise $p$ already includes all items and must be optimal.
We use Lemma~\ref{lem:maxplus} to compute optimal solutions to the following integer programs for every $k \in \{0,\dotsc,2s^2 + \Delta\}$.

\begin{align}
  &\max_{x\in \mathbb Z^n} \sum_{i \in [n]} v_i x_i & \text{ subject to } &&
  \sum_{i \in [n]} s_j x_i \le k & \quad \text{and} \quad \forall_{i\in [n]}\; 0 \le x_i \le u_i - p_i
  \label{eq:solplus} \\
  &\max_{x\in \mathbb Z^n} \sum_{i \in [n]} -v_i x_i & \text{ subject to } &&
  \sum_{i \in [n]} s_i x_i = k & \quad \text{and} \quad \forall_{i\in [n]}\; 0 \le x_i \le p_i
  \label{eq:solminus}
\end{align}

We denote the resulting solutions by $x^+(k)$ and $x^-(k)$.
Note, that formally the algorithm in Lemma~\ref{lem:maxplus} outputs solutions to the
variant of~\eqref{eq:solplus} with equality, we can transform it 
to the above form with a single pass over the solutions.

For any $k$ the solution $x(k) = p - x^-(k) + x^+(k + \Delta)$
is feasible. We compute values of all such solutions and select the best of them.
To show that this is indeed an optimal solution, it suffices to show that
for one such $k$ the solution is optimal.
Let $k = \sum_{i \in [n]} \max\{0, s_i (p_i - z_i)\} \le 2 s^2$, then
$(x - z)_+$ is feasible
for~(\ref{eq:solplus}) with $k+\Delta$ and $(p - z)_+$ for~(\ref{eq:solminus}) with $k$.
Thus,
\begin{align*}
  \sum_{i\in [n]} v_i (x(k))_i &= \sum_{i \in [n]} v_i p_i &-& \sum_{i \in [n]} v_i (x^-(k))_i &&+ \sum_{i\in [n]} v_i (x^+(k + \Delta))_i \\
  &\ge \sum_{i \in [n]} v_i p_i &-& \sum_{i \in [n]} v_i\max\{0, p_i - z_i\} &&+ \sum_{i \in [n]} v_i\max\{0, z_i - p_i\}
  = \sum_{i \in [n]} v_i z_i .
\end{align*}
It remains to bound the running time. The maximal prefix solution can be found in
time $\Oh(n)$. Each of the two calls to the algorithm of Lemma~\ref{lem:maxplus} takes time
$\Oh(n + s^3 + s^2 \log^2(s)) = \Oh(n + s^3)$ and selecting the best solution among
the $2s^2$ candidates takes time $\Oh(s^2)$.
\end{proof}

\section{Knapsack with small item values}
\label{sec:small-v}
In this section we show that it is also possible to solve Knapsack in time $\Oh(n + v^3)$, proving Theorem~\ref{thm:knapsack-small-values}.
This can be derived directly from the $\Oh(n+s^3)$ time algorithm from the previous section.
Essentially, we swap the item values and sizes by considering the complementary problem of finding the items that are not taken in the solution.
Then our goal is to solve
\begin{align}\label{eq:ks-complement}
  &\min_x \sum_{i\in [n]} v_i x_i &&\text{ subject to }  \sum_{i\in [n]} s_i x_i \ge \sum_{i\in [n]} u_i s_i - t &&\text{ and } \quad \forall_{i\in [n]} 0 \le x_i \le u_i .
\end{align}
Suppose we are satisfied with any solution that has value at least some given
$v^\star$.
Then this can be solved by
\begin{align*}
  &\max_x \sum_{i\in [n]} s_i x_i &&\text{ subject to } \sum_{i\in [n]} v_i x_i
  \le \sum_{i\in [n]} u_i v_i - v^\star &&\text{ and } \quad \forall_{i\in [n]} 0 \le x_i \le u_i .
\end{align*}
Notice that this is now a Knapsack problem with item sizes bounded by $v$.
Hence, our previous algorithm can solve it in time $\Oh(n + v^3)$.
It remains to find the optimum of~(\ref{eq:ks-complement}) and use it for the value of $v^\star$.
Notice that the maximal prefix solution $p$ gives a good estimate of this
$v^\star$, because its value
is between $v^\star - v + 1$ and $v^\star$.
Thus, one could in a straight-forward way implement a binary search for $v^\star$ and this would increase
the running time only by a factor of $\log(v)$, but we can avoid this and get an $\Oh(n+v^3)$ time algorithm.

It is enough to devise an algorithm that in time $\Oh(n + v^3)$ computes a
solution for each of the $v$ potential values values of $v^\star$ at once. Then we
can return the largest $v^\star$ for which the solution requires a knapsack of size
at most $t$. Fortunately, our original Theorem~\ref{thm:knapsack} can compute
solution to every $t' \in \{t-v,t-v+1,\ldots,t\}$ and the $\Oh(n+v^3)$ time algorithm for Knapsack follows. 

We include a small modification of Knapsack algorithm from
Section~\ref{sec:knapsack} for completeness.

\begin{claim}
    In $\Oh(n+s^3)$ time we can compute an optimal solution to Knapsack for
    every $t' \in \{t-s,t-s+1,\ldots,t\}$.
\end{claim}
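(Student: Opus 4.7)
The plan is to adapt the proof of Theorem~\ref{thm:knapsack} by anchoring all $s+1$ target capacities to a single maximal prefix solution $p$ built for the original capacity $t$, rather than computing a fresh prefix for each $t'$.

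First I would dispose of the degenerate case in which $p$ already contains every item, so that $p$ is optimal for every $t' \le t$. Otherwise the standard argument gives $\sum_{i\in[n]} s_i p_i \in [t-s+1,t]$. I would then strengthen the proximity bound so that it anchors to $p$ rather than to the (a~priori different) maximal prefix solution $p'$ of each $t'$. Since $t' \le t$ and both prefixes are built greedily in the same efficiency order, we have $p' \le p$ componentwise and $\sum_i s_i (p_i - p'_i) \le t-(t'-s+1) \le 2s-1$. As every item has size at least one, this yields $\lVert p - p' \rVert_1 \le 2s-1$. Combining this with Lemma~\ref{lem:proximity} applied to the $t'$-instance produces an optimum $z'$ satisfying
\[
  \lVert z' - p \rVert_1 \;\le\; \lVert z' - p' \rVert_1 + \lVert p' - p \rVert_1 \;\le\; 4s.
\]

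Next I would invoke Lemma~\ref{lem:maxplus} twice, with capacity bound $5s^2$, to precompute the solutions $x^+(k)$ and $x^-(k)$ to~\eqref{eq:solplus} and~\eqref{eq:solminus} for every $k \in \{0,1,\dotsc,5s^2\}$ in $\Oh(n + s\cdot s^2 + s^2\log^2(s^2)) = \Oh(n+s^3)$ time. For each $t' \in \{t-s,\dotsc,t\}$ I would then set $\Delta' = t' - \sum_i s_i p_i \in [-s,s-1]$ and scan $k \in \{\max(0,-\Delta'),\dotsc,4s^2\}$, returning the candidate $x(k) = p - x^-(k) + x^+(k+\Delta')$ of largest value. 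Taking $k^\star = \sum_i s_i \max\{0, p_i - z'_i\}$, the strengthened proximity bound gives $k^\star \le s\lVert z'-p\rVert_1 \le 4s^2$, and one checks $k^\star + \Delta' = \sum_i s_i \max\{0, z'_i - p_i\} \ge 0$. Then $(p - z')_+$ is feasible for~\eqref{eq:solminus} at $k^\star$ and $(z' - p)_+$ is feasible for~\eqref{eq:solplus} at $k^\star + \Delta'$, so, exactly as in the proof of Theorem~\ref{thm:knapsack}, $x(k^\star)$ matches the optimum. Each scan costs $\Oh(s^2)$, so across the $s+1$ values of $t'$ the post-processing contributes $\Oh(s^3)$, giving the claimed $\Oh(n+s^3)$ total.

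The main obstacle is the strengthened proximity lemma itself: the original statement anchors to the prefix of the specific capacity under consideration, whereas here a single prefix must serve simultaneously for a window of $s+1$ capacities. The other mild subtlety is the bookkeeping when $\Delta' < 0$, i.e.\ when $p$ itself overshoots $t'$; the scan must then start at $k = |\Delta'|$ to keep $x^+(k + \Delta')$ well-defined and $x(k)$ feasible for~$t'$. Both points are handled by the constant-factor slack in the $\lVert z'-p\rVert_1 \le 4s$ bound, which does not affect the asymptotic running time.
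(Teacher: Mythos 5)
Your proposal is correct and follows essentially the same plan as the paper's: anchor all $s+1$ capacities to the single prefix solution $p$ for $t$, show (via the componentwise-nested prefixes and the triangle inequality) that $p$ is within $\Oh(s)$ of every optimum $z'$, precompute $x^\pm(k)$ once, and scan $\Oh(s^2)$ candidates per $t'$. You work the constants and the sign of $\Delta'$ out more explicitly than the paper does; the only tiny slip is that $k^\star + \Delta' = \sum_i s_i \max\{0, z'_i - p_i\}$ should be an inequality ($\ge$) when $z'$ does not exactly fill $t'$, but since only $\ge 0$ and feasibility of $(z'-p)_+$ under the inequality constraint are needed, the argument is unaffected.
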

\begin{proof}
    Recall that the intermediate solutions $x^+(k)$ and $x^-(k)$ depend only on the maximal
    prefix solution $p$ and the only property of $p$ that is needed is that it differs from
    the optimal solution by $\Oh(s)$ items.
    Notice that the maximal prefix solutions with respect to each of the values $t'$ above
    differ only by at most $s$ items. Hence, $p$, the prefix solution for $t$,
    differs from each of the optimal solutions only by $\Oh(s)$.
    Hence, we only need to compute $x^+(k)$ and $x^-(k)$ once.
    Given these solutions the remaining computation takes only $\Oh(s^2)$ for each $t'$;
    thus, $\Oh(s^3)$ in total.
\end{proof}

\section{Subset Sum}

In this section we give a $\Ot(n+s^{5/3})$ time algorithm for Subset Sum with multiplicties proving Theorem~\ref{thm:subsetsum}.
Our algorithm is a combination of additive combinatorics and proximity arguments.
Throughout this section, we denote by $\Ss(A)$ the set of all subset sums of
a (multi-)set of integers $A$. In other words, $t \in \Ss(A)$ if there exists some $B\subseteq A$
with $\Sigma(B) = t$.

\begin{algorithm}[ht!]
  \SetAlgoLined
  \DontPrintSemicolon
  \SetKwInOut{Input}{Algorithm}
  \SetKwInOut{Output}{Output}
    \Input{$\mathtt{SubsetSum}(I,t)$. }
    \Output{Multiset $X \subseteq I$ with $\Sigma(X) = t$ or $\mathtt{NO}$ if such a multiset does not exists. }
    Preprocess $I$ using Lemma~\ref{lem:proximity} so that $u_i \le O(s)$ for all $i\in [n]$\\
    Set $k := \floor{s^{1/3}}$\\
    Construct $I^{\up} := \{ (\max \{0,\floor{u_i/k} - 8\}, s_i) \; : \; (u_i,s_i) \in I\}$ \\
    Construct $I^{\dn} := \{ (u_i - k \cdot \max \{0,\floor{u_i/k} - 8\}, s_i) \; : \; (u_i,s_i) \in I\}$ \\
    Construct oracle to $\Ss(I^\dn)$ \tcp*{with Lemma~\ref{lem-addcomb}}
    Construct set of candidates $\Cc(I^\up) \subseteq \Ss(I^\up)$ \tcp*{with Lemma~\ref{lem:candidates}}
    \ForEach{$t' \in \Cc(I^\up)$}{
        \If{$t - k \cdot t' \in \Ss(I^{\dn})$}{
            Recover $A \subseteq I^\up$ with $\Sigma(A) = t'$\\
            Recover $B \subseteq I^\dn$ with $\Sigma(B) = t - k \cdot t'$\\
            \Return $(k \cdot A) \cup B$ 
        }
    }
    \Return $\mathtt{NO}$
    \caption{$\Ot(n+s^{5/3})$ time algorithm for Subset Sum with multiplicties}
  \label{alg:subset-sum}
\end{algorithm}

We now give a high level overview of the algorithm (see Algorithm~\ref{alg:subset-sum}).
In the following we assume w.l.o.g.\ that $u\le
\Oh(s)$ by using the preprocessing described after Lemma~\ref{lem:proximity}.
First, we split the instance $I$ into two parts $I^\up$ and $I^\dn$:
Let $k = \lfloor s^{1/3} \rfloor$.
For every item $s_i$ with multiplicity $u_i$
we add  $u^\up_i =  \max \{0,\floor{u_i/k} - 8\}= \Oh(s^{2/3})$ many items of size $s_i$ into multiset
$I^\up$. The rest of the items of size $s_i$, i.e., $u^\dn_i = u_i - k \cdot
u^\up_i = \Oh(s^{1/3})$ many, are added to multiset $I^\dn$. 
Intuitively, $I^\up$ stands for taking bundles of $k$ items. The
set $I^\dn$ consists of the remaining items. In particular, it holds that:
\begin{equation*}
    \Ss(I) = \left\{ k t^\up + t^\dn \; : \; t^\up \in  \Ss(I^\up) \text{ and } t^\dn \in \Ss(I^\dn)\right\} .
\end{equation*}
Our goal is to decide whether there exists an integer $t' \in \Ss(I^\up)$ with the
property that $t - kt' \in \Ss(I^\dn)$. The strategy of the algorithm is as
follows: we will use proximity arguments to bound the number of candidates
for such a $t' \in \Ss(I^\up)$ and efficiently enumerate them in
time $\Ot(s^{5/3})$. 

\begin{lemma}
    \label{lem:candidates}
    In time $\Ot(s^{5/3})$, we can construct $\Cc(I^\up) \subseteq \Ss(I^\up)$
    of size $|\Cc(I^\up)| \le \Ot(s^{5/3})$ with the property that if $t \in
    \Ss(I)$, then there exists $t' \in \Cc(I^\up)$ and $t - kt' \in \Ss(I^\dn)$.
    Moreover, for any $t' \in \Cc(I^\up)$ we can find $X \subseteq I^\up$ with
    $\Sigma(X) = t'$ in time $\Ot(s^{5/3})$.
\end{lemma}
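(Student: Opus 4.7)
The plan is to combine a sharpened proximity argument with Bringmann's Subset Sum algorithm: proximity gives a computable base value $T_p$ so that every useful $t'$ lies in a window of length $\Oh(s^{5/3})$ around $T_p$, and within that window $\Cc(I^\up)$ is realized as the sumset $T_p + S^+ - S^-$ of two small, easily computable subset-sum sets.

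First I compute a maximal prefix solution $p$ for $I$ viewed as Knapsack with $v_i = s_i$; by Lemma~\ref{lem:proximity}, if $t \in \Ss(I)$ there is a witness $z$ with $\|z-p\|_1 \le 2s$. Then I fix a specific decomposition $p = kp^\up + p^\dn$, with $p^\up_i \in [0,u^\up_i]$ and $p^\dn_i \in [0,u^\dn_i]$, chosen so that for every item with $u^\up_i \ge 1$ (hence $u^\dn_i \ge 8k$ by the algorithm's definition) the entry $p^\dn_i$ lies roughly in the middle of $[0,u^\dn_i]$ --- concretely, I aim at $p^\dn_i \in [3k, 4k)$ whenever $p_i$ allows, and otherwise clamp $p^\up_i$ to $0$ or to $u^\up_i$. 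Set $T_p := \sum_i s_i p^\up_i$.

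The crucial proximity claim is that $z$ admits a valid decomposition $z = ky + w$ (with $y_i \in [0,u^\up_i]$, $w_i \in [0,u^\dn_i]$) satisfying $\|y - p^\up\|_1 = \Oh(s/k) = \Oh(s^{2/3})$, whence $|t' - T_p| \le s\|y-p^\up\|_1 = \Oh(s^{5/3})$ for $t' := \sum_i s_i y_i$. Writing $\delta_i := z_i - p_i$ and $\alpha_i := y_i - p^\up_i$, the centered choice of $p^\dn_i$ ensures that $\alpha_i = 0$ is feasible (just take $w_i := p^\dn_i + \delta_i$) whenever $\delta_i$ lies in a window of width $\Omega(k)$ around $0$; outside this window one has $|\delta_i| = \Omega(k)$ and $|\alpha_i| \le |\delta_i|/k + \Oh(1)$. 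Combined with $\sum_i|\delta_i| \le 2s$, this sums to $\|\alpha\|_1 = \Oh(s/k)$; items with $u^\up_i = 0$ contribute nothing since $y_i = p^\up_i = 0$. The main technical obstacle is the case analysis required to exhibit the $\Omega(k)$-wide safe window in all three regimes (small $p_i$; moderate $p_i$; $p^\up_i$ saturated at $u^\up_i$).

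Given the window, define $I^{up+} := \{(u^\up_i - p^\up_i,\, s_i)\}_i$, $I^{up-} := \{(p^\up_i,\, s_i)\}_i$, and $S^\pm := \Ss(I^{up\pm}) \cap [0, Cs^{5/3}]$ for the implicit constant $C$ from the proximity claim. Both $S^+$ and $S^-$ are computed in $\Ot(s^{5/3})$ time by Bringmann's algorithm~\cite{Bringmann17}, since $|I^{up\pm}| \le N^\up = \Oh(s^{5/3})$. Put $\Cc(I^\up) := T_p + (S^+ - S^-)$, which is obtained by a single FFT in $\Ot(s^{5/3})$ time and has size at most $2Cs^{5/3}+1$. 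Every $T_p + A - B \in \Cc(I^\up)$ lies in $\Ss(I^\up)$: taking witnesses $\alpha^+, \alpha^-$ of $A, B$, the vector $y_i := p^\up_i + \alpha^+_i - \alpha^-_i$ is in $[0,u^\up_i]$ (since $\alpha^+_i \le u^\up_i - p^\up_i$ and $\alpha^-_i \le p^\up_i$) and satisfies $\sum_i s_i y_i = T_p + A - B$. Conversely, the proximity witness $z$ above yields $\alpha = \alpha^+ - \alpha^-$ with $A = \sum_i s_i \alpha^+_i \le s\|\alpha\|_1 = \Oh(s^{5/3})$ and $B$ analogously, so $t' \in \Cc(I^\up)$. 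For recovery given $t' \in \Cc(I^\up)$, I scan $S^+$ for an $A$ with $A - (t'-T_p) \in S^-$ in $\Oh(s^{5/3})$ time, then invoke Bringmann with witness retrieval on $I^{up+}$ at target $A$ and on $I^{up-}$ at target $A - (t'-T_p)$, each in $\Ot(s^{5/3})$, and return $X$ consisting of $y_i := p^\up_i + \alpha^+_i - \alpha^-_i$ copies of item $i$.
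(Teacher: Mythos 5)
Your proposal is correct and matches the paper's approach: you construct the same robust split of $p$ into $p^\up$, $p^\dn$ (the paper's Definition~\ref{bundle-preserving}, with slightly different centering constants), establish the same key proximity bound $\norm{y - p^\up}_1 = \Oh(s/k)$ (this is exactly Lemma~\ref{lem:prox-bundle}, whose case analysis you correctly outline), and build $\Cc(I^\up) = t_p + (S^+ - S^-)$ by running Bringmann's algorithm on the ``taken'' and ``untaken'' halves of $I^\up$ followed by an FFT sumset. Your recovery argument (scan for a matching pair $(A,B)$, then use witness-retrieving Bringmann on each half) is also what the paper does.
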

We will prove this lemma in Section~\ref{sec:proximity}.
To check the condition $t - kt' \in \Ss(I^\dn)$ we observe
that the set $I^\dn$ has bounded multiplicity and large density. To accomplish
that we use Lemma~\ref{lem-addcomb}. It uses a
recent result of Bringmann and Wellnitz~\cite{dense-subset-sum} and enables us
to decide in constant time if $t - kt' \in \Ss(I^\dn)$ for any $t'$ after
a preprocessing that requires time $\Ot(|I^{\dn}| + s^{3/2}(u^\dn)^{1/2}) \le
\Ot(s^{5/3})$ because $|I^{\dn}| \le \Oh(s \cdot u^\dn)$ and $u^\dn \le
\Oh(s^{1/3})$. We extend their methods to be able to construct the solution within $\Ot(s^{5/3})$
time (see Section~\ref{sec:recovering-results} for the proof of Lemma~\ref{lem-addcomb}).

Now, we analyse the correctness of Algorithm~\ref{alg:subset-sum}.
The running
time is bounded by $\Ot(n + s^{5/3})$ because the number of candidates is $|\Cc(I^\up)| \le
\Ot(s^{5/3})$ by Lemma~\ref{lem:candidates}. The algorithm is correct
because set $\Cc(I^\up)$ has the property that if an answer to the Subset Sum is
positive, then there exists $t' \in \Cc(I^\up)$ with $t - k \cdot t' \in \Ss(I^{\dn})$

Moreover, when the answer is positive
we have $t' \in \Cc(I^\up)$ with $t - k \cdot t' \in \Ss(I^\dn)$.
We use Lemma~\ref{lem:candidates} to recover $A \subseteq I^\up$ with
$\Sigma(A) = t'$. Then we use Lemma~\ref{lem-addcomb} to find $B \subseteq
I^\dn$ with $\Sigma(B) = t - k \cdot t'$. We construct a final solution $(k \cdot A) \cup B$ by
unbundling items in $A$ (duplicating them $k$ times) and joining them with set $B$.
This concludes the proof of Theorem~\ref{thm:subsetsum}.

\subsection{Finding a small set of candidates}
\label{sec:proximity}
In this section we derive a small set of candidates $\Cc(I^\up)$ and prove
Lemma~\ref{lem:candidates}. This is based on the proximity result (Lemma~\ref{lem:proximity}).
Recall that for $p$, a maximal prefix solution, we know that there exists some
feasible solution that differs only by $\Oh(s)$ items (if the instance is feasible).
Suppose we split $p$ between $I^\up$ and $I^\dn$ into $p^\up$ and $p^\dn$.
As each of the items in $I^\up$ stands for $k$ items in $I$, one might expect
that the part of the optimal solution that comes from $I^\up$ differs from $p^\up$
by only $\Oh(s / k)$ items. This is not necessarily true if $p^\up$ and $p^\dn$ are
chosen unfavorably. Fortunately, we can show
that with a careful choice of $p^\up$ and $p^\dn$ it can be guaranteed.

\begin{definition}[Robust split]
    \label{bundle-preserving}
    Let $p = (p_1,\ldots,p_n) \in \nat^n$ be a maximal prefix solution.
    Let $p^\up,p^\dn \in \nat^n$ be defined by 
    \begin{displaymath}
        p^\up_i = \begin{cases} 
            0 &\text{if } p_i \le 4 k \text{ or } u_i \le 8k, \\
            \floor{u_i/k} - 8 &\text{if }  p_i \ge u_i - 4 k \text{ and } u_i > 8 k, \\
            \floor{p_i/k} - 2 &\text{if }  p_i \in (4k, u_i - 4 k) \text{ and } u_i > 8 k,
        \end{cases}
    \end{displaymath}
    and $p^\dn_i = p_i - k p^\up_i$, for every $i \in [n]$.
    We call $p^\up, p^\dn$ the \emph{robust split} of $p$.
\end{definition}
An important property of the choice of $p^\up$ and $p^\dn$ is that there is
some slack for $p^\dn_i$. Namely, if we were to change $p_i$ slightly (say, by less than $k$)
then we only need to
change $p^\dn_i$ (and do not need to change $p^\up_i$) to maintain $p_i = k p^\up_i + p^\dn_i$.

\begin{lemma}
  \label{lem:prox-bundle}
  Let $p$ be a maximal prefix solution and let $p^\up,p^\dn$ be its robust split.
  If $t \in \Ss(I)$, then there are solutions $x^\up,x^\dn$ of $I^\up$
  and $I^\dn$ such that:
  \begin{displaymath}
      \sum_{i\in [n]} (kx^\up_i + x^\dn_i) s_i = t  \;\; \text{ and } \;\; \norm{ x^\up - p^\up }_1 \le \Oh(s/k).
  \end{displaymath}
\end{lemma}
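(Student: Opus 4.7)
The plan is to apply the proximity lemma (Lemma~\ref{lem:proximity}), viewing the Subset Sum instance $I$ as a Knapsack instance with $v_i = s_i$, to obtain an optimal solution $z \in \Z^n$ with $\norm{z - p}_1 \le 2s$. Writing $\delta_i := z_i - p_i$, we have $\sum_i |\delta_i| \le 2s$. The goal is to decompose $z$ coordinate-wise as $z_i = k x^\up_i + x^\dn_i$ with $0 \le x^\up_i \le u^\up_i$ and $0 \le x^\dn_i \le u^\dn_i$, keeping $x^\up_i$ close to $p^\up_i$. A valid decomposition of each $z_i \in [0, u_i]$ always exists: when $u_i \le 8k$ we have $u^\up_i = 0$ and $(0, z_i)$ trivially works; when $u_i > 8k$, the bound $u^\dn_i \ge 8k$ makes the feasible integer interval for $x^\up_i$ nonempty. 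I then choose $x^\up_i$ in this interval as close as possible to $p^\up_i$ and set $x^\dn_i := z_i - k x^\up_i$; the equation $\sum_i (k x^\up_i + x^\dn_i) s_i = \sum_i z_i s_i = t$ is immediate.

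I next exploit a structural property of the maximal prefix solution: since the greedy procedure stops as soon as one copy no longer fits, the vector $p$ satisfies $p_i = u_i$ for every item preceding some unique boundary index $i^*$, $p_i = 0$ for every item following $i^*$, and $p_{i^*} \in [0, u_{i^*}]$. Consequently, every index $i \ne i^*$ falls into case (a) of Definition~\ref{bundle-preserving} (when $p_i = 0$, or $u_i \le 8k$) or into case (b) (when $p_i = u_i > 8k$), while only the single index $i^*$ can lie in case (c). In case (a) with $u_i > 8k$, we have $p^\up_i = 0$ and $p^\dn_i = p_i \le 4k$ while $u^\dn_i \ge 8k$, so $x^\up_i = 0 = p^\up_i$ is feasible unless $\delta_i > 4k$. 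In case (b), $p^\up_i = u^\up_i$ and $p^\dn_i \ge 4k$, so $x^\up_i = u^\up_i = p^\up_i$ is feasible unless $\delta_i < -4k$. A direct calculation shows that whenever deviation is forced, $|x^\up_i - p^\up_i| \le |\delta_i|/k + \Oh(1)$.

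For the at-most-one boundary item $i^*$, the two-sided buffers $p^\dn_{i^*} \in [2k, 3k)$ and $u^\dn_{i^*} - p^\dn_{i^*} > 5k$ permit $x^\up_{i^*} = p^\up_{i^*}$ for every $\delta_{i^*} \in [-2k, 5k]$, and in any remaining degenerate sub-case $|x^\up_{i^*} - p^\up_{i^*}|$ is bounded by an absolute constant. Setting $B := \{i \ne i^* : x^\up_i \ne p^\up_i\}$, the above yields $|B| \le \norm{z - p}_1 / (2k) \le s/k$, and hence
$\norm{x^\up - p^\up}_1 \le \Oh(1) + \sum_{i \in B}\bigl(|\delta_i|/k + \Oh(1)\bigr) \le 2s/k + \Oh(s/k) = \Oh(s/k)$,
as desired. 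The main obstacle is the case analysis itself: one must verify that the specific constants $4k$, $8k$, $-8$, $-2$ in Definition~\ref{bundle-preserving} are calibrated so that the feasible interval for $x^\up_i$ always contains an integer within $\Oh(|\delta_i|/k + 1)$ of $p^\up_i$; this bookkeeping, straightforward but tedious, is where the design of the robust split pays off.
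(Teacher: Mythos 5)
Your proposal is correct and follows essentially the same route as the paper: apply Lemma~\ref{lem:proximity} to obtain $z$ with $\norm{z-p}_1 \le 2s$, decompose $z_i = k x^\up_i + x^\dn_i$ coordinate-wise so that $x^\up_i$ stays as close to $p^\up_i$ as the feasibility bounds allow, and use the slack built into the robust split to argue that $x^\up_i = p^\up_i$ whenever $|z_i - p_i| = \Oh(k)$ and $|x^\up_i - p^\up_i| = \Oh(|z_i-p_i|/k)$ otherwise. The one genuine addition you make is the observation that the maximal prefix solution has the threshold form $p_i \in \{0, u_i\}$ except at a single boundary index $i^*$, so that at most one index falls into the third case of Definition~\ref{bundle-preserving}; this nicely streamlines the bookkeeping, whereas the paper handles all three cases uniformly at every index. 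Both routes yield the same $\Oh(s/k)$ bound.
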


\begin{figure}
    \centering
    \includegraphics[width=0.7\textwidth]{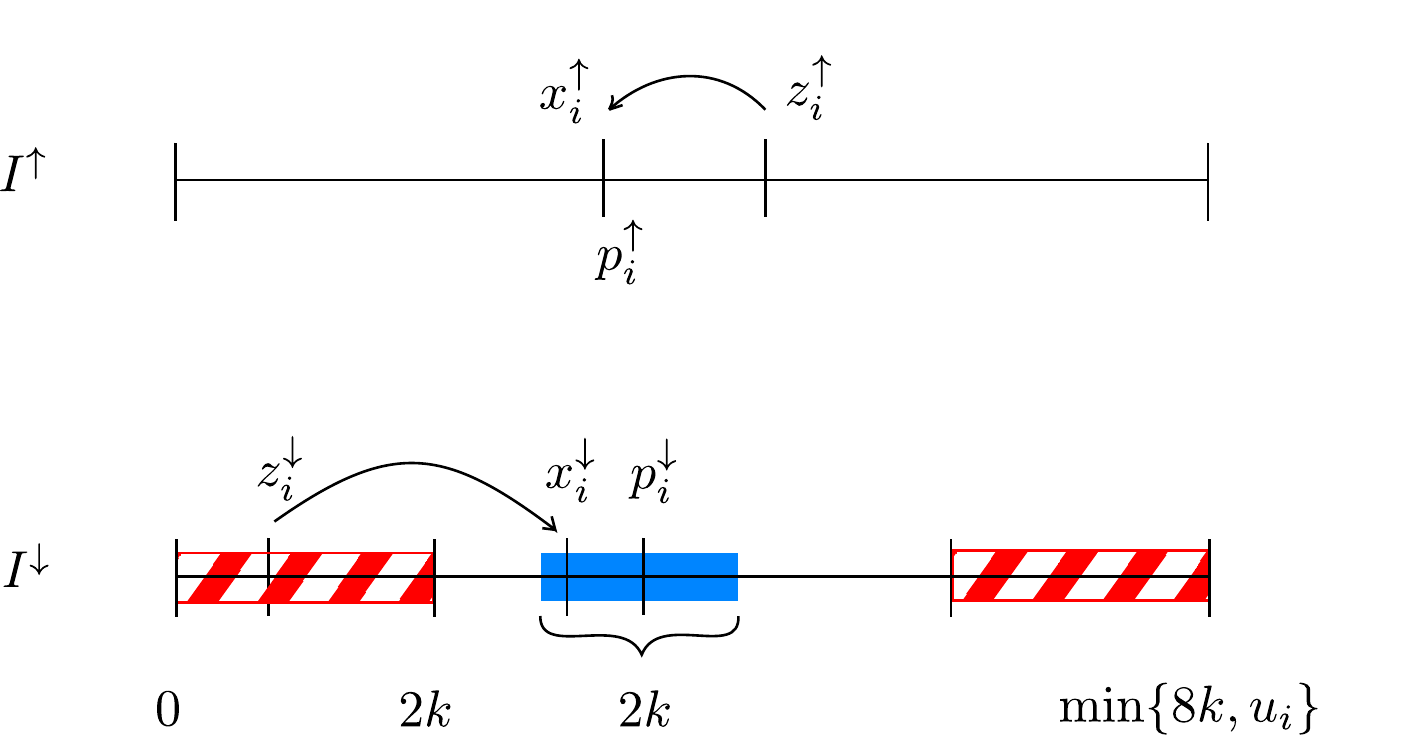}
    \caption{Schematic idea behind the proof of Lemma~\ref{lem:prox-bundle} in
    the case $u_i > 8k$.  We select $p^\dn_i$ such that it is never in the red
    hatched regions. The property of $p^\up,p^\dn$ is that any solution that differs by at most $k$
    elements can take the same elements from $I^\up$ as $p^\up_i$. 
    In that case the situation where the optimal solution is of the form $z$ (in the figure)
    can always be avoided to instead get the situation with $x$.}
    \label{fig:proximity}
\end{figure}

\begin{proof}
    The proof consists of straightforward, but tedious, calculations; see Figure~\ref{fig:proximity} for an intuition.

    By Lemma~\ref{lem:proximity} we know that there is a feasible solution $x$
    with $\norm{x - p}_1 \le \Oh(s)$. Let us use this solution and construct
    $x^\up$ and $x^\dn$. We consider each index $i \in [n]$ individually and show that
    there is a split of $x_i$ into $x^\up_i$, $x^\dn_i$ which are feasible for
    $I^\up$, $I^\dn$, that is, they satisfy the bounds $[0, u^\up_i]$ and $[0, u^\dn_i]$,
    and we have that $|x^\up_i - p^\up_i| \le 19 |x_i - p_i| / k$. This implies the lemma.
    To this end, consider two cases based on $|x_i -p_i|$.
    
    \paragraph*{Case 1: $|x_i - p_i| < k$}

    In this case we set $x^\up_i = p^\up_i$ and $x^\dn_i = x_i - k x^\up_i$. This
    choice  does not contribute anything to the norm
    $\norm{x^\up - p^\up}_1$ because $x_i^\up = p_i^\up$. We need to show that
    $x^\up_i$ and $x^\dn_i$ are feasible
    solutions to the subset sum instances $I^\up$ and $I^\dn$. 
    
    Clearly, $x^\up_i$ is between $0$ and $u^\up_i = \max\{\lfloor u_i / k \rfloor - 8, 0\}$
    in the first two cases of Definition~\ref{bundle-preserving}.
    In the last case, we have that $4k < p_i < u_i - 4k$ and therefore
    $0 \le \lfloor p_i / k \rfloor - 2 = p^\up_i$.
    Furthermore, $p^\up_i = \lfloor p_i / k \rfloor - 2 < p_i / k - 3 < u_i / k - 7 < \lfloor u_i / k \rfloor - 8 = u^\up_i$.

    Therefore, it remains to show that $x^\dn_i$ is feasible for $I^\dn$.
    More precisely, we will prove that:
    \begin{displaymath}
        0 \le x_i - k p^\up_i \le u_i - k \cdot \max\{ \floor{u_i/k}-8,0\}.
    \end{displaymath}
    To achieve that, we will crucially rely on our choice for $p^\up$ in Definition~\ref{bundle-preserving}. 

    \paragraph*{Case 1a: $u_i \le 8k$ or $p_i \le 8k$}
    When $u_i \le 8 k$ or $p_i \le 8 k$ then the claim follows because $p_i^\up = x_i^\up = 0$. 
    
    \paragraph*{Case 1b: $p_i \ge u_i - 4k$ and $u_i > 8k$}
    In this case we have $p^\up_i := \floor{u_i/k} - 8$.
    Inequality $x_i - k p^{\up}_i \ge 0$ follows from $x_i > p_i - k \ge u_i - 5k \ge k \lfloor u_i / k \rfloor - 8k = k p^\up_i$.
    Next, we use the fact $x_i \le u_i$ to conclude $x_i - k \floor{u_i/k} +8k \le u_i - k \floor{u_i/k} + 8k$.  
    
    \paragraph*{Case 1c: $4k < p_i < u_i - 4k$ and $u_i > 8k$}
    
    In this case we have $p^\up_i := \floor{p_i/k}-2$.
    The inequality $x_i > p_i -k \ge k \floor{p_i/k} - k = k p^\up_i + k$ shows that $x_i -
    kp^\up_i \ge 0$.
    Next, we use inequality $p_i - k \floor{p_i/k} < k$ to show that
    \begin{equation*}
        x_i - k \floor{p_i/k} \le p_i - k \floor{p_i/k} + k\le 2k < \underbrace{u_i - k \floor{u_i/k}}_{\ge 0} + 8k .
    \end{equation*}
    
    \paragraph*{Case 2: $|x_i - p_i| \ge k$}
    We set $x^\up_i := \min\{\floor{x_i/k}, u^\up_i\}$ and
    $x^\dn_i := x_i - k x^\up_i$.
    Clearly, $0 \le x^\up_i \le u^\up_i$. Furthermore,
    $x^\dn_i \ge 0$ and if $x^\up_i = u^\up_i$ then $x^\dn_i \le u_i - k u^\up_i = u^\dn_i$;
    otherwise, $x^\dn_i = x_i - k\floor{x_i / k} < k < u^\dn_i$.

    It remains to bound the difference between
    $x^\up_i$ and $p^\up_i$.
    Note that because of our choice of $p^\up$ we have $|p^\up_i - \floor{p_i/k}| \le 8$.
    Also, $|u^\up_i - \floor{u_i / k}| \le 8$.
    Therefore,
    \begin{multline*}
        |x^\up_i - p^\up_i| \le \big|\min\{\floor{x_i/k},u^\up_i\} - \floor{p_i/k}\big| + 8
        \le \big|\underbrace{\min\{\floor{x_i/k},\floor{u_i/k}\}}_{= \floor{x_i / k}} - \floor{p_i/k}\big| + 16 \\
        \le |x_i - p_i| / k + 18 .
    \end{multline*}
    Recall that we assumed $|x_i - p_i| \ge k$. Thus,
    \begin{equation*}
      |x^\up_i - p^\up_i| \le 19 |x_i - p_i| / k . \qedhere
    \end{equation*}
\end{proof}

Now we are ready to proceed with the algorithmic part and prove Lemma~\ref{lem:candidates}.

\begin{proof}[Proof of Lemma~\ref{lem:candidates}]
    Let $t_p$ be the value of $p^\up$ in $I^\up$, that is, $t_p :=
    \sum_{i \in [n]} p^\up_i s_i$.
    Let $A^-$ be the multiset of numbers selected in $p^\up$.
    Moreover, let $A^+$ denote all other elements in $I^\up$.
    We now compute
    \begin{displaymath}
        S^+ := \Ss(A^+) \cap [c \cdot s^{5/3}]  \;\; \text{ and } \;\; S^- :=
        \Ss(A^-) \cap [c \cdot s^{5/3}] .
    \end{displaymath}  
    Here $c$ is a constant that we will specify later.
    These sets can be computed in time $\Ot(s^{5/3})$ with Bringmann's
    algorithm~\cite{Bringmann17}. Next, using FFT we compute the sumset
    \begin{equation*}
        \Cc(I^\up) := \{t_p + a - b \mid a \in S^+, b\in S^-\} .
    \end{equation*}

    Any element in $\Cc(I^\up)$ is an integer of the form $t_p + a -b$,
    integer $a$ is the sum of elements not in $p^\up$, and integer $t_p - b$ is the contribution
    of elements in $p^\up$.
    This operation
    takes time $\Ot(s^{5/3})$ because the range of values of $S^+$ and $S^-$
    is bounded by $\Oh(s^{5/3})$. We return set $\Cc(I^\up)$ as the set of possible
    values of the candidates. To recover a solution we will use the fact that
    Bringmann's algorithm can recover solutions and the property that we can find
    a witness to FFT computation in linear time.
    Since $S^+$ and $S^-$ are subsets of $[c \cdot s^{5/3}]$ we know that
    $\Cc(I^\up)$ is a subset of $\{t_p - c \cdot s^{5/3}, \dotsc, t_p + c\cdot s^{5/3}\}$.
    In particular, $|\Cc(I^\up)| \le \Ot(s^{5/3})$.

    It remains to to show that $\Cc(I^\up)$ contains some $t'$ such that $t - k t'\in \Ss(I^\dn)$
    if $t\in \Ss(I)$.
    By Lemma~\ref{lem:prox-bundle} it suffices to show that $\Cc(I^\up)$ contains
    all values $\sum_{i\in [n]} x_i^\up s_i$ for $x^\up$ with $\lVert x^\up - p^\up\rVert_1 \le c \cdot s/k$,
    where $c$ is the constant in the lemma.
    This holds because $S^+$ contains $\sum_{i\in [n] : x^\up_i \ge p^\up_i} (x_i^\up - p^\up_i) s_i$
    and $S^-$ contains $\sum_{i\in [n] : x^\up_i \le p^\up_i} (p^\up_i - x_i^\up) s_i$.
\end{proof}

\subsection{Introduction to additive combinatorics methods}
\label{sec:additive-comb-subset-sum-detail}

In this section we review the structural ideas behind the proof of
Theorem~\ref{prop-addcomb}. Next, in Section~\ref{sec:recovering-results} we show how to
use them to recover a solution to Subset Sum with multiplicities.

The additive combinatorics structure that we
explore is present in the regime when $N^2 \gg us$. We formalize this
assumption as follows:

\begin{definition}[Density]
    We say that a multiset $X$ is $\delta$-dense if it satisfies $|X|^2 \ge
    \delta u s$.
\end{definition}
Note that if all numbers in $X$ are divisible by the same integer $d$,
then the solutions to Subset Sum are divisible by $d$. Intuitively, this
situation is undesirable, because our goal is to exploit the density of the
instance. With the next definition we quantify how close we are to the case
where almost all numbers in $X$ are divisible by the same number.

\begin{definition}[Almost Divisor]
    We write $X(d) := X \cap d \mathbb{Z}$ to denote the multiset of all
    numbers in $X$ that are divisible by $d$ and $\ol{X(d)} := X \setminus X(d)$
    to denote the multiset of all numbers in $X$ not divisible by $d$. We say
    that an integer $d > 1$ is an $\alpha$-almost divisor of $X$ if $|\ol{X(d)}|
    \le \alpha u \Sigma(X)/|X|^2$.
\end{definition}
Bringmann and Wellnitz~\cite{dense-subset-sum} show that this situation is not the hardest case. 
\begin{lemma}[Algorithmic Part of~\cite{dense-subset-sum}]
    \label{lem:reduce-divisor-free}
    Given an $\Tht(1)$-dense multiset $X$ of size $N$ in time $\Ot(N)$ we
    can compute an integer $d \ge 1$, such that $X' := X(d)/d$ is
    $\Tht(1)$-dense and has no $\Tht(1)$-almost divisors.
\end{lemma}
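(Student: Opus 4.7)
The plan is to iteratively strip almost-divisors, starting with $d := 1$ and $X^{(0)} := X$. In each round $i$, I look for a prime $p$ that is an $\alpha$-almost divisor of $X^{(i)}$ for a small constant $\alpha$ to be chosen below. If such a $p$ is found, I update $X^{(i+1)} := X^{(i)}(p)/p$ and $d \mapsto d\cdot p$; otherwise I stop and return $d$. A simple induction shows $X^{(i)} = X(d)/d$, so the returned set has the required form, and the stopping criterion directly guarantees it has no $\alpha$-almost divisor.

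Restricting the search to primes is justified by the observation that if $d' > 1$ is an $\alpha$-almost divisor and $p$ is any prime factor of $d'$, then $X^{(i)}(d') \subseteq X^{(i)}(p)$, so $p$ is itself an $\alpha$-almost divisor. This also yields termination: each chosen prime $p$ divides some element of the original $X$, hence $d \le s$ and at most $\Oh(\log s)$ rounds occur before the loop ends.

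For the density invariant, I will use $\Sigma(X^{(i)}) \le |X^{(i)}|\cdot s^{(i)}$ (where $s^{(i)}$ is the max element) together with $p\ge 2$ to derive $|X^{(i+1)}| \ge |X^{(i)}|(1-\alpha/\delta_i)$ and $s^{(i+1)} \le s^{(i)}/2$, which combine to give $\delta_{i+1} \ge 2\delta_i(1-\alpha/\delta_i)^2$. Picking $\alpha$ as a small constant fraction of the initial density $\delta_0 = \Theta(1)$ ensures $\delta_{i+1} \ge \delta_i$ throughout, so the returned $X(d)/d$ is $\Omega(1)$-dense and, by construction, free of $\Theta(1)$-almost divisors, as required.

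The main obstacle is implementing each round in amortized $\Ot(N)$ time, since naively enumerating all primes up to $s$ is too expensive. I plan to exploit that any $\alpha$-almost divisor prime $p$ must divide at least a $(1-\alpha/\delta_i) \ge 9/10$ fraction of $X^{(i)}$. Sampling $\Theta(\log s)$ elements of $X^{(i)}$ uniformly at random and factoring each by trial division takes $\Oh(\sqrt{s}\log s) = \Ot(N)$ time, using the density-derived bound $\sqrt{s} \le \Ot(N)$; with high probability the $\Oh(\log^2 s)$ primes collected from these factorizations include every almost-divisor prime. Each such candidate is then verified or refuted in $\Oh(|X^{(i)}|)$ by counting divisibilities in $X^{(i)}$. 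A union bound over the $\Oh(\log s)$ rounds controls the overall failure probability, and summing work across rounds yields the claimed $\Ot(N)$ bound.
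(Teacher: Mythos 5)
The paper does not actually prove this lemma; it imports it verbatim from Bringmann and Wellnitz~\cite{dense-subset-sum} and uses it as a black box, so there is no ``paper's proof'' to compare against. Your argument is a self-contained reconstruction, and the high-level strategy (iteratively strip a prime that is an $\alpha$-almost divisor, reducing to primes via $\ol{X(d')}\supseteq\ol{X(p)}$ for any $p\mid d'$, and bounding the number of rounds by $\log_2 s$ since the accumulated $d$ must divide some element of $X$) is sound. The density recursion $\delta_{i+1}\ge 2\delta_i(1-\alpha/\delta_i)^2$ is correct, though it silently uses that the maximum multiplicity does not increase when passing from $X^{(i)}$ to $X^{(i)}(p)/p$ -- true, but worth stating. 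Also the fraction $1-\alpha/\delta_i$ is only $\ge 1-(1-1/\sqrt2)\approx 0.71$, not $9/10$; this is harmless but the constants in the sampling step should be adjusted accordingly.

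Where your route genuinely diverges from Bringmann--Wellnitz is the detection of the almost-divisor prime. You sample $\Theta(\log s)$ elements, factor each by trial division in $\Oh(\sqrt{s})$ time (using $\sqrt{s}\le\Ot(N)$ from density), and union-bound over the $\Oh(\log s)$ possible almost-divisor primes per round and the $\Oh(\log s)$ rounds. This is a clean randomized shortcut, but it makes the routine Monte Carlo with two-sided error: with small probability the loop may terminate while an almost divisor remains, or commit to a ``divisor'' it never certified. Bringmann and Wellnitz's algorithmic part -- as the present paper summarizes it in the proof of Claim~\ref{lem:decomposition-construct} -- is deterministic, sieving all primes up to a bound $\tau=\Tht(u\Sigma/|X|^2)$ and batch-factoring a $\Tht(\tau)$-size subset, which is affordable precisely because $\tau\le\Ot(N)$ in the dense regime. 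Your version trades determinism for a simpler implementation; if used inside Theorem~\ref{thm:subsetsum} one must fold its failure probability into the overall ``with high probability'' guarantee and be a bit careful that the final reported subset is still verified, so the one-sided-error claim of the theorem is preserved.
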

They achieve that with novel prime factorization techniques. This lemma allows them
essentially to reduce to the case that there are no $\Tht(1)$-almost divisors.
We will give more details on this in the end of Section~\ref{sec:additive-comb-subset-sum}.
In our proofs we use
the Lemma~\ref{lem:reduce-divisor-free} as a blackbox. Next, we focus on the
structural part of their arguments.

\paragraph*{Structural part}

The structural part of~\cite{dense-subset-sum} states the surprising property. If we
are given a sufficiently dense instance with no almost divisors then every set
with target within the given region is attainable.

\begin{theorem}[Structural Part of~\cite{dense-subset-sum}]
    \label{thm:structural-part}
    If $X$ is $\Tht(1)$-dense and has no $\Tht(1)$-almost divisor then
    $[\lambda_X,\ldots,\Sigma(X)-\lambda_X ] \subseteq \mathcal{S}(X)$ for some
    $\lambda_X = \Tht(u s \Sigma(X)/|X|^2))$.
\end{theorem}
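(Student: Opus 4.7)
The plan is to follow the standard Freiman--S\'ark\"ozy strategy for dense subset sums, adapted to multisets as in Bringmann--Wellnitz: every target $t$ in the stated range will be expressed as a sum of two pieces, one from a ``generator'' part $X_1$ that yields a long interval of subset sums, and one from a ``reservoir'' part $X_2$ that shifts this interval across the whole range. First, partition $X$ into $X_1$ and $X_2$ of comparable sizes, so that both remain $\Tht(1)$-dense (with slightly worse constants) and neither inherits a $\Tht(1)$-almost divisor; this can be done, for instance, by a random split and a short check against all small candidate divisors.

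The first substantive step is to show that $\Ss(X_1)$ contains an arithmetic progression $P$ of length $\Omega(s)$ with some common difference $d$, centered near $\Sigma(X_1)/2$. This is a quantitative S\'ark\"ozy/Freiman-type statement, proved by iteratively forming sumsets of $X_1$, applying Pl\"unnecke--Ruzsa-style estimates, and pigeonholing on residue classes; the density $|X_1|^2 = \Omega(us)$ is exactly what forces the length $\Omega(s)$. The second step is to force $d = 1$: if $d > 1$, unwinding the construction of $P$ shows that all but $\Oh(u\Sigma(X)/|X|^2)$ elements of $X_1$ must lie in $d\Z$, which would make $d$ a $\Tht(1)$-almost divisor of $X$ and contradict the hypothesis. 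Hence $\Ss(X_1)$ actually contains a genuine integer interval $J$ of length $\Omega(s)$ around $\Sigma(X_1)/2$.

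For the third step, given any $t \in [\lambda_X, \Sigma(X) - \lambda_X]$, I would greedily build a subset sum $\sigma$ of $X_2$ with $|t - \Sigma(X_1)/2 - \sigma| \le |J|/2$; this is feasible because $X_2$ still has $\Omega(|X|)$ elements, each of size $\Oh(s)$, so its subset sums form an $\Oh(s)$-net of $[0,\Sigma(X_2)]$ away from the boundary. Then $t - \sigma \in J \subseteq \Ss(X_1)$ and thus $t \in \Ss(X)$; the choice $\lambda_X = \Tht(us\Sigma(X)/|X|^2)$ absorbs the boundary slack needed both for $J$ and for the reservoir approximation. The main obstacle will be the quantitative version of Step~1 in the multiset setting: because each value can appear up to $u$ times, the pigeonhole step degrades, which is precisely why $u$ enters the bound on $\lambda_X$; carefully tracking constants through the sumset iteration, and arguing that any modular obstruction produced by Step~1 can indeed be charged against the almost-divisor count of $X$ in Step~2, is the technically delicate part that the Bringmann--Wellnitz paper addresses with dedicated structural lemmas.
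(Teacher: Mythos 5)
Your high-level picture (split $X$ into a generator and a reservoir, show the generator's subset sums contain a long structured object, use the reservoir to translate it across the range) is in the right spirit, but the mechanism you propose for producing an interval---showing $\Ss(X_1)$ contains a long AP with step $d$ and then \emph{forcing} $d=1$ via the no-almost-divisor hypothesis---is not how the proof works, and I do not think it can be made to work as you have written it. The paper (following Bringmann--Wellnitz) uses a \emph{three}-part decomposition $X = R \uplus A \uplus G$: the set $A$ furnishes an arithmetic progression $\mathcal P \subseteq \Ss(A)$ of length $2s$ with step size $\kappa$, which in general remains $\kappa > 1$; the set $R$ is a \emph{residue completer}, constructed so that $\Ss(R)$ hits every class in $\Z_\kappa$ (in fact $\Z_d$ for every $d\le K$, since $\kappa$ is not known in advance); and $G$ is the reservoir. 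The no-almost-divisor hypothesis is what guarantees that enough elements avoid any given small divisor, so that one can harvest them into $R$ to make $\Ss(R)$ $\kappa$-complete---it is \emph{not} used to shrink the AP's step to $1$. An integer interval only appears after you combine $\Ss(A)$ with $\Ss(R)$.

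Your ``Step 2'' (``if $d>1$, all but $\Oh(u\Sigma(X)/|X|^2)$ elements of $X_1$ must lie in $d\Z$, so $d$ is an almost divisor'') is exactly the implication that is not available at the needed quantitative level; the AP in $\Ss(X_1)$ may have step $\kappa>1$ for reasons that do not translate directly into concentration of $X_1$ in $\kappa\Z$, and the almost-divisor threshold is in fact quite small (order $u\Sigma(X)/|X|^2$, which is $\ll |X|$), so the contradiction you want does not follow. The clean way to exploit the almost-divisor-free hypothesis is precisely the paper's route: show that because every $d\le K$ has $\Omega(u\Sigma(X)/|X|^2)$ non-multiples in $X$, one can assemble a small set $R$ whose subset sums are $d$-complete for all such $d$ simultaneously. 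Separately, your ``random split'' idea is also fragile: if $X$ has no $\alpha$-almost divisor, halving it roughly halves both $|X_1|$ and $\Sigma(X_1)$, so the relevant threshold $\alpha u\Sigma(X_1)/|X_1|^2$ roughly \emph{doubles}, and a random half need not remain almost-divisor-free; the paper avoids this by building $R$, $A$, $G$ deterministically from prime-factorization information. Finally, the order of operations in the last step differs from yours: the paper first greedily fixes a reservoir set $G'\subseteq G$ to land in a window of width $\approx \kappa s$, then uses $R'\subseteq R$ to fix the residue mod $\kappa$, and only then uses the AP in $\Ss(A)$ for the exact hit---the reservoir comes first, not last, which matters for the greedy feasibility argument.
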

Therefore,~Bringmann and Wellnitz~\cite{dense-subset-sum} after the reduction to the
almost-divisor-free setting can simply output \texttt{YES} on every target in the
selected region. Our goal is to recover the solution to subset sum and
therefore, we need to get into the details of this proof and show that we can
efficiently construct it. The crucial insight into this theorem is the following
decomposition of a dense multiset.

\begin{lemma}[Decomposition, see~{\cite[Theorem 4.35]{dense-subset-sum}}]
    \label{lem:decomposition}
    Let $X$ be a $\Tht(1)$-dense multiset of size $n$ that has no
    $\Tht(1)$-almost divisor. Then there exists a partition $X = R \uplus A
    \uplus G$ and an integer $\kappa = \Ot(u \Sigma(X)/|X|^2)$ such that:

    \begin{itemize}
        \item set $\mathcal{S}(R)$ is $\kappa$-complete, i.e., $\mathcal{S}(R) \bmod \kappa = \mathbb{Z}_\kappa$,
        \item set $\mathcal{S}(A)$ contains an arithmetic progression
            $\mathcal{P}$ of length $2s$ and step size $\kappa$ satisfying
            $\max\{\Pp\} \le \Ot(u s \Sigma(X)/|X|^2)$,
        \item the multiset $G$ has sum $\Sigma(G) \ge \Sigma(X)/2$.
    \end{itemize}
\end{lemma}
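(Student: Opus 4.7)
The plan is to set $\kappa$ to be a suitable integer of magnitude $\Tht(u\Sigma(X)/|X|^2)$ and to construct $R$, $A$, and $G := X \setminus (R \cup A)$ in that order. The construction of $R$ is a greedy residue-covering argument, the construction of $A$ is the additive-combinatorics heart of the proof, and $G$ is whatever remains.

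For $R$ I would maintain a growing set $H \subseteq \Z_\kappa$ of already-realized subset sums modulo $\kappa$, initially $H = \{0\}$. As long as $H \ne \Z_\kappa$, the stabilizer of $H$ under translation is a proper subgroup of $\Z_\kappa$, hence of the form $d\Z_\kappa$ for some proper divisor $d$ of $\kappa$. The no-$\Tht(1)$-almost-divisor hypothesis guarantees $|\overline{X(d)}| = \Omega(u\Sigma(X)/|X|^2) = \Omega(\kappa)$ elements of $X$ whose residues modulo $\kappa$ avoid $d\Z_\kappa$; move any such element $a$ into $R$ and update $H \leftarrow H \cup ((H + a) \bmod \kappa)$. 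Each step at least doubles $|H|$, so $O(\log \kappa)$ rounds suffice, yielding $|R| = O(\log \kappa)$ and $\Sigma(R) = \Ot(s)$, which will be negligible compared to $\Sigma(X)$.

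For $A$ the goal is to carve out an arithmetic progression of step exactly $\kappa$ and length $2s$ inside $\Ss(X \setminus R)$, with maximum $\Ot(s\kappa) = \Ot(us\Sigma(X)/|X|^2)$. By pigeonhole, some residue class modulo $\kappa$ contains $\Omega(|X|/\kappa)$ elements of $X \setminus R$, which by density comfortably exceeds $2s$. Summing suitable groups of these elements produces subset sums whose consecutive differences all lie in that single residue class; a Plünnecke--Ruzsa or Sárközy-style tool applied to these sumsets then lets one pin the common difference down to exactly $\kappa$ and lay down the required $2s$ terms. I would budget $O(s)$ elements for $A$, so that $\Sigma(A) = \Ot(s\kappa) = \Ot(\Sigma(X))$ by density; choosing the constant hidden in $\kappa$ small enough, together with $\Sigma(R) = \Ot(s)$, then makes $\Sigma(R) + \Sigma(A) \le \Sigma(X)/2$, so that $\Sigma(G) \ge \Sigma(X)/2$.

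The main obstacle is the middle step: producing an arithmetic progression whose common difference is \emph{exactly} $\kappa$ (rather than merely a multiple of it), with both length $2s$ and maximum $\Ot(s\kappa)$ simultaneously, is precisely where the delicate additive-combinatorics machinery (Theorem~4.35 of Bringmann--Wellnitz) does the real work; my sketch only outlines the overall skeleton of that step and leaves the quantitative Freiman-type analysis to the reference.
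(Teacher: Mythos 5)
Your overall shape (build $R$ for residue completeness, build $A$ for an arithmetic progression via a Sárközy-type theorem, let $G$ be the rest) matches the Bringmann--Wellnitz proof that the paper cites for this lemma, but two of your concrete steps do not hold up, and the order in which you fix $\kappa$ is backwards.

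First, the greedy doubling argument for $R$ is not sound. From $a \notin d\Z_\kappa$ (where $d\Z_\kappa$ is the stabilizer of $H$) you cannot conclude $|H \cup (H+a)| \ge 2|H|$: take $H = \{0,1,2\} \subseteq \Z_{10}$, whose translation stabilizer is trivial, and $a=1$; then $|H \cup (H+1)| = 4$, not $6$. By Kneser's theorem, adding an element outside the stabilizer only guarantees growth by roughly the size of the stabilizer, so the set can creep up by one element at a time. This is why the true construction of $R$ (Theorem~4.20 of Bringmann--Wellnitz, reproduced in Claim~\ref{lem:decomposition-construct}) takes an arbitrary block of $\tau = \Tht(u\Sigma(X)/|X|^2)$ elements and then patches it with $\tau$ extra elements per bad prime $p \le \tau$; the result has size $\Ot(\tau) = \Ot(\kappa)$, not $O(\log\kappa)$, and the argument is number-theoretic (prime factorizations), not a greedy set-doubling.

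Second, your chicken-and-egg between $R$, $A$, and $\kappa$ is genuine, and your sketch does not resolve it. You fix $\kappa$ up front and then try to build an AP in $\Ss(A)$ with common difference \emph{exactly} $\kappa$, which you correctly flag as hard. In the reference there is no such constraint: $A$ is simply taken to be the $\lfloor n/4\rfloor$ smallest elements of $X\setminus R$, a Sárközy/Freiman-type theorem yields \emph{some} long AP in $\Ss(A)$ with \emph{some} step $\kappa \le K$, and the problem is dissolved by making $R$ complete for \emph{every} modulus $d \le K$ simultaneously (the property stated in Claim~\ref{lem:decomposition-construct}), so that whatever $\kappa$ falls out of the $A$-step, $R$ is already $\kappa$-complete. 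Finally, your pigeonhole bound for $A$ (``some residue class mod $\kappa$ contains $\Omega(|X|/\kappa)$ elements, which comfortably exceeds $2s$'') only works out under $\Tht(s)$-density: $|X|/\kappa = |X|^3/(u\Sigma(X)) \ge |X|^2/(us)$, and $\Tht(1)$-density makes this merely $\Omega(1)$, not $\Omega(s)$. On the plus side, your bookkeeping for $G$ ($\Sigma(R)+\Sigma(A) \le \Sigma(X)/2$ via choosing constants) and the citation to Sárközy/Plünnecke--Ruzsa machinery as the crux are in the right spirit; but as written, the $R$-doubling step is wrong, the $\kappa$-ordering is inverted, and the density counting for $A$ does not close.
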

Now, we sketch the proof of Theorem~\ref{thm:structural-part} with the
decomposition from Lemma~\ref{lem:decomposition}. This is based on the proof
in~\cite{dense-subset-sum}.

\begin{proof}[Sketch of the proof of Theorem~\ref{thm:structural-part} assuming Lemma~\ref{lem:decomposition}]
    We show that any target $t \in [\lambda_X,\ldots,\Sigma(X)-\lambda_X]$ is a
    subset sum of $X$.
    For that we will assume without loss of generality that $t \le \Sigma(X) / 2$
    (note that $t$ is a subset sum if and only if $\Sigma(X) - t$ is).
    By Lemma~\ref{lem:decomposition} we get a partition of
    $X$ into $R \uplus A \uplus G$. We know that $\mathcal{S}(A)$ contains an arithmetic
    progression $\Pp \subseteq \mathcal S(A)$, with $\Pp =
    \{a+\kappa,a+2\kappa,\ldots,a+2s\kappa\}$. We
    construct a subset of $X$ that sums to $t$ as follows. First, we greedily
    pick $G' \subseteq G$ by iteratively adding elements until:
    \begin{displaymath}
       t - \Sigma(G') \in [a+\kappa(s+1), a+\kappa (s+1) + s].
    \end{displaymath}
    This is possible because the largest element is bounded by $s$, $t$ is at most
    $\Sigma(X)/2 \le \Sigma(G)$, and $\lambda_X$ is selected such that:
    \begin{displaymath}
        t \ge \lambda_X \ge a+\kappa(s+1).
    \end{displaymath}
    The next step is to select a subset $R' \subseteq R$ that sums up to a number congruent to
    $(t-\Sigma(G') - a)$ modulo $\kappa$. Recall, that set $R$ is
    $\kappa$-complete, hence such a set must exist. Moreover, w.l.o.g.~$R' < \kappa$, and $\Sigma(R') <
    \kappa s$.
    Therefore, we need extra elements of total sum
    \begin{displaymath}
        t - \Sigma(G'\cup R') \in [a+\kappa, a+2\kappa s],
        \quad \text{and} \quad
        t - \Sigma(G'\cup R') \equiv a \mod \kappa.
    \end{displaymath}
    Finally, we note that this is exactly the range of elements of the
    arithmetic progression $\Pp$. It means that we can pick a subset $A' \subseteq
    A$, that gives the appropriate element of the arithmetic progression $\Pp$
    and $t = \Sigma(G'\cup R' \cup A')$.
\end{proof}

\subsection{Recovering a solution}
\label{sec:recovering-results}

In this section, we show how to recover a solution to Subset Sum.  We
need to overcome several technical difficulties. First, we need to reanalyze
Lemma~\ref{lem:decomposition} and show that the partition $X = A \uplus R \uplus
G$ can be constructed efficiently. This step follows directly
from~\cite{dense-subset-sum}.  However, we do not know of an efficient way to
construct $\kappa$ and $a$. We show that we do not really need it. Intuitively,
for our application we can afford to spend a time $\Ot(N + \lambda_I)$.
This observation enables us to use the $\Ot(N+t)$ time
algorithm of Bringmann~\cite{Bringmann17} to reconstruct the solution. We
commence with the observation that the decomposition into $R \uplus A \uplus G$ can be
constructed within the desired time.

\begin{claim}[Recovering decomposition]
    \label{lem:decomposition-construct}
    Let $X$ be a $\Tht(1)$-dense multiset that has no
    $\Tht(1)$-almost divisor. Let $K = 42480 \cdot u \cdot \Sigma(X)
    \log(2u)/|X|^2$.  Then in $\Ot(N + u \Sigma(X)/|X|^2)$ time we can explicitly find
    a partition $X = R \uplus A \uplus G$ such that:
    \begin{itemize}
        \item set $\mathcal{S}(R)$ is $d$-complete for any $d \le K$,
        \item there exists an integer $\kappa \le K$ such that the set $\mathcal{S}(A)$ contains an arithmetic progression
            $\mathcal{P}$ of length $2s$ and step size $\kappa$ satisfying
            $\max\{\Pp\} \le \Ot(u s \Sigma(X)/|X|^2)$,
        \item the multiset $G$ has sum $\Sigma(G) \ge \Sigma(X)/2$.
    \end{itemize}
\end{claim}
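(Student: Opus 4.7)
The plan is to upgrade the existential Lemma~\ref{lem:decomposition} to an explicit algorithmic statement. Two things need to be done: (i) verify that every stage of the Bringmann--Wellnitz decomposition can be carried out in time $\Ot(N + u\Sigma(X)/|X|^2)$; and (ii) ensure that the residue-cover $R$ is $d$-complete for \emph{every} $d \le K$, whereas the existential version only guarantees $\kappa$-completeness for one specific $\kappa$.

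For (i) I would follow the proof of Lemma~\ref{lem:decomposition} in \cite{dense-subset-sum} essentially verbatim. The three pieces $R$, $A$, $G$ there are each produced by a concrete combinatorial procedure: elements of $X$ are bucketed by residues modulo the relevant small divisors, the set $A$ is exhibited through the S\'ark\"ozy-style progression witness used in \cite{dense-subset-sum}, and $G$ is simply the leftover. Each of these steps is either a linear-time pass over $X$ or work polynomial in $K = \Oh(u\Sigma(X)/|X|^2)$, so the overall cost stays within $\Ot(N + u\Sigma(X)/|X|^2)$. The mass estimate $\Sigma(G) \ge \Sigma(X)/2$ is inherited directly from the existential argument, since $R$ and $A$ are small enough.

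For (ii), I would iterate $d$ through $\{2,\dotsc,K\}$ and maintain an $R$ that is $d$-complete at the end of every iteration. At the start of step $d$, if $\Ss(R) \bmod d$ does not already cover $\Z_d$, I invoke a standard doubling argument: because $X$ has no $\Tht(1)$-almost divisor, $\overline{X(d)}$ is large and contains elements of sufficiently many distinct residues, so adding $\Oh(\log d)$ further elements from $\overline{X(d)}$ to $R$ suffices to close $\Ss(R) \bmod d$ up to all of $\Z_d$. Summing over $d \le K$ yields $|R| = \Ot(K)$ and total mass $\Ot(sK) = \Ot(us\Sigma(X)/|X|^2) \ll \Sigma(X)$ in the dense regime, so $X \setminus R$ remains $\Tht(1)$-dense and $\Tht(1)$-almost-divisor-free. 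The Bringmann--Wellnitz decomposition then applies to $X \setminus R$ and produces the required $A$ and $G$ with some $\kappa \le K$.

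The main obstacle will be accounting for the running time of the uniform residue covering: a naive implementation would recompute residues mod $d$ for every $d \le K$, costing $\Oh(NK)$, which is too slow. I would avoid this by maintaining $R$ together with a hash-based index on its subset-sum residues and only touching elements when new ones are inserted; since only $\Ot(K)$ elements are ever added to $R$ and each insertion updates the covered-residue structure in amortized $\polylog(K)$ time, the residue management runs in $\Ot(K \polylog K)$, comfortably within budget. Pinning down the explicit constant $42480$ in $K = 42480 \cdot u \Sigma(X) \log(2u)/|X|^2$ is a bookkeeping exercise inherited from \cite[Section~4]{dense-subset-sum}: it must be large enough to simultaneously absorb the bound on $\kappa$ supplied by the existential decomposition and the slack consumed by ranging $d$ over the entire interval $\{2,\dotsc,K\}$.
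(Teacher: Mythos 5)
Your part (ii) is where the proposal departs from the paper and where it has a genuine gap. You correctly observe that the existential decomposition (Lemma~\ref{lem:decomposition}) only certifies $\kappa$-completeness for a single $\kappa$ and that the claim needs $d$-completeness for \emph{every} $d \le K$. But the fix you propose --- iterating $d = 2, \dotsc, K$ and, at each step, closing the coverage gap of $\Ss(R) \bmod d$ by a doubling argument that adds $\Oh(\log d)$ elements from $\overline{X(d)}$ --- does not go through. The no-almost-divisor hypothesis only lower-bounds the \emph{cardinality} of $\overline{X(d)}$; it says nothing about how those elements are distributed among the nonzero residues mod $d$. If, for instance, $d = 2p$ and essentially all of $\overline{X(d)}$ sits in a single residue class mod $d$ (say $\equiv p$), then no amount of doubling with elements drawn from $\overline{X(d)}$ will push $\Ss(R) \bmod d$ beyond the order-two subgroup $\{0, p\}$. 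Escaping this trap requires mixing in elements not divisible by other prime factors of $d$, i.e., elements from $\overline{X(e)}$ for various $e \mid d$ --- which is precisely what the Bringmann--Wellnitz prime-based construction (their Theorem~4.20) is engineered to do, and precisely what you would have to reprove from scratch. On top of this, the claimed $\Ot(K\,\polylog K)$ accounting for maintaining the partial subset-sum residue structures simultaneously across all moduli $d \le K$ is not substantiated; a naive index is $\Omega(K^2)$, and the proposal gives no mechanism that gets around this.

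The paper takes a simpler and more direct route: it does not need an extra iteration over $d$ at all, because BW's Theorem~4.20 already produces a single $R$ that is $d$-complete for \emph{all} $d \le K$ simultaneously. Concretely, the paper picks an arbitrary $R' \subseteq X$ of size $\tau = \Tht(u\Sigma(X)/|X|^2)$, sieves for primes up to $\tau$, factorizes the members of $R'$ in $\Ot(\tau)$ time, identifies the $\Oh(\log s)$ problematic primes $P$, and augments $R'$ with an arbitrary $R_p \subseteq \overline{X(p)}$ of size $\tau$ for each $p \in P$. All of this is a constant number of linear passes plus $\Ot(\tau)$ sieving and factoring work, so the budget $\Ot(N + u\Sigma(X)/|X|^2)$ is met without any incremental modular bookkeeping. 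The set $A$ is simply the $\lfloor n/4 \rfloor$ smallest elements of $X \setminus R$ and $G$ is the remainder; the property guarantees for $A$ and $G$ are then inherited unchanged from BW's Theorem~4.35. Your part (i) is in the right spirit but stays at the level of ``follow the proof verbatim'' without identifying these concrete steps or their costs, so the running-time claim is asserted rather than established.
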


\begin{proof}[Proof of Claim~\ref{lem:decomposition-construct} based on arguments from~\cite{dense-subset-sum}]
    We follow the proof of Theorem 4.35 in \cite{dense-subset-sum}.
    We focus on the construction of partition $X$ and present
    only how the construction follows from~\cite{dense-subset-sum}.

    To construct the set $R$, Bringmann and Wellnitz use
    \cite[Theorem 4.20]{dense-subset-sum}. We start
    by picking an arbitrary subset $R' \subseteq X$ of size $\tau = \Tht(u
    \Sigma(X)/|X|^2)$. Next we generate the set $S$ of all prime numbers $p$ with $p \le \tau$. We
    can do this in $\Ot(\tau)$ time by the sieve of Eratosthenes
    algorithm~\cite{sorenson1990introduction}. Then, we compute the prime
    factorization of every number in $R'$ by \cite[Theorem 3.8]{dense-subset-sum}
    in $\Ot(\tau)$ time. This enables us to
    construct the set $P$ of primes $p$ with $p \le \tau$ such that every $p \in P$
    does not divide at least $\tau/2$ numbers in $R'$. Bringmann~and~Wellnitz~\cite{dense-subset-sum} show that
    $|P| \le 2 \log{s}$. Next, for every $p \in P$ we select an arbitrary subset
    $R_p \subseteq \overline{X(p)}$ of size $|R_p| = \tau$. This can be done in $\Ot(N)$ time because $|P| =
    \Ot(1)$. Finally~Bringmann and Wellnitz~\cite{dense-subset-sum} construct $R
    = R' \cup \left(\bigcup_{p
    \in P} R_p\right)$. See Theorem 4.20 in~\cite{dense-subset-sum} for a
    proof that the constructed $R$ is $d$-complete for any $d \le K$.

    To construct set $A$ we do exactly the same as Bringmann and Wellnitz
    \cite{dense-subset-sum} and we pick at most $\floor{n/4}$ smallest elements
    from $X \setminus R$. At the end we set $G = X \setminus (R \cup A)$.
    See~\cite{dense-subset-sum} for a proof that $A$ and $G$ have a desired
    properties.
\end{proof}

Now, we are ready to prove our result about recovering the solution

\begin{lemma}[Restatement of Lemma~\ref{lem-addcomb}]
    Given a multiset $I$ of $N$ elements, in $\Ot(N + s^{3/2} u^{1/2})$ time we can construct a data
    structure that for any $t \in \nat$ can decide if $t \in \Ss(I)$ in time
    $\Oh(1)$. Moreover if $t \in \Ss(I)$ in $\Ot(N + s^{3/2} u^{1/2})$ time we can
    find $X \subseteq I$ with $\Sigma(X) = t$.
\end{lemma}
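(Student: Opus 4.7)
The decision data structure was already constructed in the proof of part~(\ref{lem-addcomb-decision}), so the plan focuses on recovering a witness in $\Ot(N + s^{3/2}u^{1/2})$ time once the oracle declares $t \in \Ss(I)$. The first thing I would do is dispatch easy regimes directly to Bringmann's $\Ot(N+t)$ algorithm~\cite{Bringmann17}, which natively returns a witness: namely, when either $\Sigma \le \Ot(s^{3/2}u^{1/2})$, or $t \le \Ot(\lambda)$, or $t \ge \Sigma - \Ot(\lambda)$ (in the last sub-case by reconstructing a witness for $\Sigma - t$ and complementing it). By the estimate $\lambda \le \Ot(s^{3/2}u^{1/2})$ derived in the proof of part~(\ref{lem-addcomb-decision}), these calls stay within the target running time. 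This leaves us with the middle regime $\Ot(\lambda) \le t \le \Sigma/2$, to which I would apply Lemma~\ref{lem:reduce-divisor-free} to reduce to a $\Theta(1)$-dense multiset $X' = X(d)/d$ with no $\Theta(1)$-almost divisor and target $t' = t/d$; a witness in $X'$ for $t'$ lifts back to a witness in $I$ trivially.

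On $X'$ I would then invoke Claim~\ref{lem:decomposition-construct} to obtain an explicit partition $X' = R \uplus A \uplus G$ together with the bound $K$. The proof of Theorem~\ref{thm:structural-part} shows that for some (unknown) $\kappa \le K$ and offset $a$, every integer in the length-$s$ interval $[a+\kappa(s+1),\, a+\kappa(s+1)+s]$ lies in $\Ss(R \cup A)$, and that the whole interval sits inside $[0, C\lambda]$ for a sufficiently large absolute constant $C$. I would accordingly precompute, via Bringmann's algorithm with witness recovery applied to $R \cup A$, the bitmask $S := \Ss(R \cup A) \cap [0, C\lambda]$, all in time $\Ot(|R \cup A| + \lambda) = \Ot(N + s^{3/2}u^{1/2})$. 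Then I would greedily add elements of $G$ to a running subset $G'$ in arbitrary order, after each insertion testing whether $t' - \Sigma(G') \in S$ via an $\Oh(1)$ lookup and stopping on the first hit; using Bringmann's witness data structure I recover $Y \subseteq R \cup A$ with $\Sigma(Y) = t' - \Sigma(G')$ and return $d \cdot (G' \cup Y)$.

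The hard part will be arguing correctness of this \emph{oblivious} greedy, because we do not know $a$ or $\kappa$ and so cannot aim at the structural interval directly. The key observation is that $\Sigma(G) \ge \Sigma(X')/2 \ge t'$, so the residual $t' - \Sigma(G')$ can be driven below the interval's upper endpoint, and each greedy step decreases the residual by a positive integer at most $s$; hence the sequence of residuals cannot jump over the length-$s$ structural interval, which is contained in $S$. A secondary technical point I expect is the careful handling of Lemma~\ref{lem:reduce-divisor-free} in the translation of the answer: one has to verify that in the middle regime $d \mid t$ and $t' \in \Ss(X')$ whenever $t \in \Ss(I)$, which follows from the case analysis underlying the proof of Theorem~\ref{prop-addcomb} in Section~\ref{sec:additive-comb-subset-sum-detail} but requires spelling out.
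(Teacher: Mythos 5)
Your overall blueprint — dispatch the easy regimes to Bringmann, reduce to the divisor-free dense case via Lemma~\ref{lem:reduce-divisor-free}, construct the partition $R \uplus A \uplus G$ with Claim~\ref{lem:decomposition-construct}, precompute $\Ss(R \cup A) \cap [0, \Ot(\lambda)]$, and then march greedily through $G$ testing the residual by lookup — matches the paper's proof closely. Your explicit ``cannot jump over the length-$s$ interval'' justification for the oblivious greedy is a clean way to phrase why the greedy hits a valid residual without knowing $a$ or $\kappa$, and it is consistent with what the paper does implicitly by appealing to the existence argument of Theorem~\ref{thm:structural-part}.

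However, there is one genuine gap, and it is not the ``secondary technical point'' you hoped would resolve itself. You set $t' := t/d$, and in your last paragraph you conjecture that ``$d \mid t$ \dots follows from the case analysis.'' This is false in general. Lemma~\ref{lem:reduce-divisor-free} produces $X' = X(d)/d$, but $\ol{X(d)}$ may well be nonempty, and a feasible target $t$ need not be divisible by $d$ — the solution may use elements not divisible by $d$. The Bringmann--Wellnitz characterization is that, in the dense regime with $t \ge \lambda_I$, one has $t \in \Ss(I)$ if and only if $t \bmod d \in (\Ss(I) \bmod d)$; this modular condition is governed entirely by $\ol{I(d)}$. To recover a witness one must therefore \emph{first} find a subset $D \subseteq \ol{I(d)}$ with $\Sigma(D) \equiv t \pmod d$ — the paper does this with the Modular Subset Sum algorithm of Axiotis et al.~\cite{axiotis19}, which also recovers the set $D$ — and only then pursue the remaining target $(t - \Sigma(D))/d$ inside $I' = I(d)/d$. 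This is an additional algorithmic component, not merely a bookkeeping step. Once you insert this modular-subset-sum stage before the decomposition and greedy, your proof aligns with the paper's.
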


In the rest of this section we prove Lemma~\ref{lem-addcomb}.  We assume that all considered $t$ are at most
$\Sigma(I)/2$ because for larger $t$ we can ask about $\Sigma(I) - t$ instead.
Let $\lambda_I$ and be defined as in Theorem~\ref{prop-addcomb}.
When the total sum of the elements $\Sigma(I)$ is bounded by
$\Ot(s^{3/2}u^{1/2})$, Bringmann's algorithm~\cite{Bringmann17}
computes $\Ss(I)$ and retrieves
the solution in the declared time. Therefore, we can assume that $s^{3/2}u^{1/2}
\le \Ot(|I|s)$. In particular, this means that $\sqrt{us} \le \Ot(|I|)$ and the
set $I$ is $\Tht(1)$-dense. Hence,
\begin{displaymath}
  \lambda_I \le \Ot\left(\frac{us \Sigma(I)}{|I|^2}\right) \le
  \Ot\left(\frac{us^2}{|I|}\right) \le \Ot(u^{1/2}s^{3/2})
  .
\end{displaymath}
This means, that we can afford $\Ot(\lambda_I)$ time.  In time $\Ot(|I| +
\lambda_I)$ we can find all subset sums in $\Ss(I)$ that are smaller than
$\Tht(\lambda_I)$ using Bringmann's algorithm. Additionally, when $t \le
\Tht(\lambda_I)$ Bringmann's algorithm can find $X \subseteq I$ in the output
sensitive time.

We are left with answering queries about targets greater than $\lambda_I$. To
achieve that within $\Ot(\lambda_I)$-time preprocessing we use the Additive Combinatorics result
from~\cite{dense-subset-sum}.

Observe, that if we were interested in a data structure that works in $\Ot(N + \lambda_I) \le \Ot(N + s^{3/2} u^{1/2})$
and \emph{decides} if $t \in \Ss(I)$ we can directly use \cite{dense-subset-sum} as a
blackbox. Therefore, from now, we show that we can also reconstruct the
solution in $\Ot(N + \lambda_I)$ time.

\begin{claim}
   We can find a set $X \subseteq I$ with $\Sigma(X) = t$ in
   $\Ot(N + \lambda_I)$ time for any $t \in [\lambda_I, \Sigma(I)/2]$ if such an $X$ exists.
\end{claim}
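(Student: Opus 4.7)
The plan is to algorithmically realize the existence proof sketched for Theorem~\ref{thm:structural-part}, using Bringmann's algorithm on the small side of the decomposition to sidestep our inability to compute the structural parameters $\kappa$ and $a$ from Lemma~\ref{lem:decomposition} explicitly.

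First, I would invoke Lemma~\ref{lem:reduce-divisor-free} in $\Ot(N)$ time to obtain an integer $d$ and reduce to the $\Tht(1)$-dense, almost-divisor-free instance $I' = I(d)/d$ with rescaled target $t/d$; the standard handling in~\cite{dense-subset-sum} takes care of the divisibility casework for targets not in $d\Z$. Next, apply Claim~\ref{lem:decomposition-construct} to explicitly build the partition $I' = R \uplus A \uplus G$ in $\Ot(N + \lambda_I)$ time, noting that $\Sigma(R \cup A) = \Ot(\lambda_I)$ while $\Sigma(G) \ge \Sigma(I')/2$.

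Second, run Bringmann's $\Ot(N+t)$ Subset Sum algorithm on the multiset $R \cup A$ with the cap $\Sigma(R \cup A) = \Ot(\lambda_I)$, producing a data structure that answers membership in $\Ss(R \cup A)$ in $\Ot(1)$ and retrieves a witnessing subset in $\Ot(N+\lambda_I)$; this preprocessing costs $\Ot(|R \cup A| + \lambda_I) \le \Ot(N + \lambda_I)$. Then iteratively build $G' \subseteq G$ by adding elements one at a time (in arbitrary order), halting as soon as the remaining target $t/d - \Sigma(G')$ lies in $\Ss(R \cup A)$. The structural proof of Theorem~\ref{thm:structural-part} guarantees that for some intermediate $G'$, this remainder lands in a window of width $s$ fully contained in $\Ss(R \cup A)$; since every element of $G$ is at most $s$, the monotonically decreasing sequence $t/d - \Sigma(G')$ cannot jump over this window, so we succeed after at most $|G| \le N$ greedy steps. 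At the stopping point, retrieve the witnessing $R' \cup A' \subseteq R \cup A$ with $\Sigma(R' \cup A') = t/d - \Sigma(G')$ from Bringmann's data structure and return $d \cdot (G' \cup R' \cup A')$ mapped back into $I$.

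The main obstacle is that the structural parameters $\kappa$ and $a$ underlying the decomposition are not computable within our time budget, which prevents us from steering the greedy directly toward the precise target window $[a+\kappa(s+1), a+\kappa(s+1)+s]$ that the existence proof uses. The resolution is that the greedy does not need to know this window at all: it simply queries Bringmann's data structure and stops on any hit, with correctness assured by the fact that the unknown structural window is guaranteed to produce one. A secondary subtlety is the handling of the almost-divisor $d$ and the corresponding rescaling, which is performed exactly as in the decision counterpart and contributes only polylogarithmic overhead, keeping the total running time within $\Ot(N + \lambda_I)$.
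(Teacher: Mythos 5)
Your core strategy matches the paper's: reduce by the almost divisor $d$ via Lemma~\ref{lem:reduce-divisor-free}, build the decomposition $I' = R \uplus A \uplus G$ explicitly with Claim~\ref{lem:decomposition-construct}, run Bringmann's algorithm on the small side $R \cup A$, and then scan greedily built prefixes $G'$ against the resulting subset-sum table, relying on the existence of the unknown window $[a+\kappa(s+1),\,a+\kappa(s+1)+s]$ rather than computing $\kappa,a$. That last observation --- that the greedy needs no knowledge of $\kappa$ and $a$ --- is the key trick and you have it.

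However, there is a genuine gap in your handling of the almost-divisor reduction. You write that you ``reduce to $I' = I(d)/d$ with rescaled target $t/d$'' and that ``the standard handling in~\cite{dense-subset-sum} takes care of the divisibility casework for targets not in $d\Z$.'' That is fine for the \emph{decision} problem (which is all~\cite{dense-subset-sum} needs), but this claim is about \emph{recovering} a witness $X$. When $d \nmid t$, every solution must include some elements of $I \setminus I(d)$, and those are not represented in $I'$ at all. You therefore need to explicitly find a small subset $D \subseteq I \setminus I(d)$ with $\Sigma(D) \equiv t \pmod d$, incorporate it into the output, and then apply your greedy recovery to $I'$ with target $(t - \Sigma(D))/d$. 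The paper obtains $D$ by running the modular Subset Sum algorithm of Axiotis et al.~\cite{axiotis19} (which supports witness recovery and runs in $\Ot(N)$ time under the density assumption). Without such a step your returned set $d\cdot(G' \cup R' \cup A')$ is always a multiple of $d$, so it cannot equal $t$ when $d \nmid t$.

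A second, more minor issue: you assert $\Sigma(R \cup A) = \Ot(\lambda_I)$. That bound holds for $R$ (it has $\Ot(u\Sigma/|X|^2)$ elements, each at most $s$), but not for $A$, which may consist of up to $\lfloor n/4\rfloor$ elements and can have total sum well above $\lambda_I$. The fix is simply to run Bringmann's algorithm with an explicit cap of $\Ot(\lambda_I)$, i.e., to compute $\Ss(R \cup A) \cap [0,\Ot(\lambda_I)]$ as the paper does; that intersection is all the structural proof ever requires and keeps the preprocessing at $\Ot(N + \lambda_I)$.
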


\begin{proof}
  First, we use Lemma~\ref{lem:reduce-divisor-free} to find an integer
  $d \ge 1$, such that set $I' := I(d)/d$ has no $\Tht(1)$-almost divisor
  and is $\Tht(1)$-dense. Observe, that the integer $d$ can be found in
  $\Ot(N)$ time and set $I'$ can be constructed in $\Ot(N)$ time.

  Bringmann and Wellnitz~\cite[Theorem 3.5]{dense-subset-sum} prove that (recall that $I$ is $\Tht(1)$-dense and $t \ge \lambda_I$):
  \begin{displaymath}
      t \in \Ss(I) \text{ if and only if } t \bmod d \in (\Ss(I) \bmod d).
  \end{displaymath}
  Therefore, the first step of our algorithm is to use Axiotis et
  al.~\cite{axiotis19} to decide if $t \bmod d \in (\Ss(I) \bmod d)$ and recover
  set $D \subseteq I \setminus I(d)$ if such a set exists (Axiotis et
  al.~\cite{axiotis19} enables to recover solution and by density assumption it
  works in $\Ot(N)$ time).
  
  If such a set exists by reasoning in~\cite{dense-subset-sum} we know that
  a solution must exist (otherwise we output \texttt{NO}). We are left with recovering set $K \subseteq I'$ with
  $\Sigma(K) := (t - \Sigma(D))/d$.

  Next, we use Claim~\ref{lem:decomposition-construct} to find a partition
  $I' = R \uplus A \uplus G$. We can achieve that in $\Ot(u \Sigma(I')/|I'|^2)
  \ll \Ot(N + \lambda_I)$ time. Ideally, we would like to repeat the reasoning presented
  in the proof of Theorem~\ref{thm:structural-part}. Unfortunately, we do
  not know how to explicitly construct $a$ and $\kappa$ within the given time.
  Nevertheless, Theorem~\ref{thm:structural-part} guarantees that
  $t \in \Ss(I)$ and that such integers $a$ and $\kappa$ exist.

  Based on the properties of sets in
  Theorem~\ref{thm:structural-part} we have
  \begin{displaymath}
      t \in \Ss(G \uplus R \uplus A) = \Ss(G) \oplus (\Ss(R \cup A) \cap [0,\Ot(\lambda_I)]).
  \end{displaymath}
  With a Bringmann's algorithm we can compute the set $T := \Ss(R \cup A) \cap
  [0,\Ot(\lambda_I)]$ in $\Ot(N + \lambda_I)$ time.
  Now, recall that in the proof of Theorem~\ref{thm:structural-part} we
  have chosen set $G'
  \subseteq G$ greedily to satisfy $\Sigma(G') \ge t- a -\kappa
  (s+1)$ for some $a,\kappa$ and $s$. Therefore it is enough to
  iterate over every greedily chosen $G' \subseteq G$ and
  check whether $t - \Sigma(G') \in T$. Notice that there are only $N$ options for $G'$
  that can be generated in $\Oh(N + \lambda_I)$ time.
  For each of them we can check whether $t - \Sigma(G') \in T$ in $\Oh(1)$ time
  because we have access to $T$. If we find such a set, we just report $K := G' \cup
  T'$, where $T' \subseteq R \cup A$ with $\Sigma(T') + \Sigma(G') = t$.
\end{proof}
This concludes the proof of Lemma~\ref{lem-addcomb}.

\bibliographystyle{plainurl}
\bibliography{bib}

\end{document}